\DeclareSymbolFont{Eulerscripteusm10}{U}{eus}{m}{n}
\DeclareMathSymbol{\euW}{\mathord}{Eulerscripteusm10}{"57}
\DeclareMathSymbol{\euD}{\mathord}{Eulerscripteusm10}{"44}
\DeclareMathSymbol{\euM}{\mathord}{Eulerscripteusm10}{"4D}
\DeclareMathSymbol{\euF}{\mathord}{Eulerscripteusm10}{"46}
\DeclareMathSymbol{\euS}{\mathord}{Eulerscripteusm10}{"53}
\DeclareMathAlphabet{\pazocal}{OMS}{zplm}{m}{n}   
\newcommand{\Pcal}{\pazocal{P}}
\newcommand{\Mcal}{\pazocal{M}}
\newcommand{\Dcal}{\pazocal{D}}
\newcommand{\Ncal}{\pazocal{N}}
\newcommand{\Tcal}{\pazocal{T}}
\newcommand{\Acal}{\pazocal{A}}
\newcommand{\Ccal}{\pazocal{C}}
\newtheorem{thm}{Theorem}
\newtheorem{lem}{Lemma}
\newtheorem{defi}{Definition}
\theoremstyle{remark}
\newtheorem*{cor}{\bf Corollary}
\newtheorem*{nota}{\bf Notation}
\newenvironment{rem}
  {\pushQED{\qed}\rema}
  {\popQED\endrema}
\newenvironment{exa}
  {\pushQED{\qed}\exaa}
  {\popQED\endexaa}
\def\@seccntformat#1{%
  \protect\textup{%
    \protect\@secnumfont
    \expandafter\protect\csname format#1\endcsname
    \csname the#1\endcsname
    \protect\@secnumpunct
  }%
}
\definecolor{labelkey}{rgb}{0,.56,.7}
\DeclareDocumentCommand\Cl{ m g }{%
    \IfNoValueF {#2} { \Ccal l_{#1\rightarrow#2}}%
    \IfNoValueT {#2} { \Ccal l_{1\rightarrow#1}}%
    }%
  \newcommand{\miniscule}{\@setfontsize\miniscule{4}{5}}
  \newcommand{\miniscule}{\@setfontsize\miniscule{5}{6}}
  \newcommand{\miniscule}{\@setfontsize\miniscule{5}{6}}
\newcommand\SmallMatrix[1]{{\miniscule\arraycolsep=0\arraycolsep\ensuremath{\begin{bmatrix}#1\end{bmatrix}}}}
\newcommand*{\at}{@}
\newcommand{\nn}{\nonumber}
\def\wh{\widehat}
\def\dg{\dagger}
\def\df{\overset{\mathrm{df}}{=}}
\newcommand{\ket}[1]{\mathop{|#1\rangle}\nolimits}
\newcommand{\bra}[1]{\mathop{\left<#1\,\right|}\nolimits}
\newcommand{\kbr}[2]{| #1\rangle\!\langle #2 |}
\newcommand{\diag}{\mathop{{\mathrm{diag}}}}
\newcommand{\Tr}[1]{\mathop{{\mathrm{Tr}}_{#1}}}
\newcommand{\id}{\mathop{{\mathrm{id}}}\nolimits}
\newcommand{\rank}{\mathop{{\mathrm{rank}}}\nolimits}
\newcommand{\im}{\mathop{{\mathrm{im}}}\nolimits}
\newcommand{\dom}{\mathop{{\mathrm{dom}}}\nolimits}
\def\row{\mathop{\mathrm{row}}}
\def\col{\mathop{\mathrm{col}}}
\def\a{\alpha}
\def\b{\beta}
\def\d{\delta}
\def\vr{\varrho}
\def\s{\sigma}
\def\bbC{\mathbb{C}}
\def\bbR{\mathbb{R}}
\def\T{\mathcal{T}}
\def\C{\mathcal{C}}
\newcommand{\msc}[1]{\mathscr{#1}}
\newcommand{\msf}[1]{\mathsf{#1}}
\sodef\so{}{.065em}{.4em plus1em}{2em plus.1em minus.1em}
\begin{document}

\title[]{\so{The pitfalls of deciding whether a quantum  channel is (conjugate) degradable and how to avoid them}}

\begin{abstract}
  To decide whether a quantum channel is degradable is relatively easy: one has to find at least one example of a degrading quantum channel. But in general, no conclusive criterion exists to show the opposite. Using elementary methods we derive a necessary and sufficient condition to decide under what circumstances the conclusion is unambiguous. The findings lead to an extension of the antidegradability region for qubit and qutrit transpose depolarizing channels. In the qubit case we reproduce the known results for the class of qubit depolarizing channels (due to their equivalence). One of the consequences is that the optimal qubit and qutrit asymmetric cloners possess a single-letter quantum capacity formula. We also investigate the ramifications of the criterion for the search of exclusively conjugate degradable channels.
\end{abstract}

\keywords{Quantum capacity of noisy quantum channels, Linear superoperators, Choi matrix, Degradable channels, Conjugate degradable channels}

\author{Kamil Br\'adler}
\email{kbradler\at uottawa.ca}

\address{Advanced Research Center, University of Ottawa, Ottawa, Canada}
\address{
    Department of Astronomy and Physics,
    Saint Mary's University,
    Halifax, Canada
    }

\maketitle

\thispagestyle{empty}

\section{Introduction}

Quantum noisy channels provide a convenient way of describing open quantum systems. They are at the forefront of interest in quantum information theory~\cite{holevo2012quantum}. A subset of quantum channels called \emph{degradable} channels~\cite{devetakshor2005capacity,cubitt2008structure} was shown to be especially important both from the physical and mathematical point of view. To get an intuitive idea (that will be made precise later in the text) what it means for a channel to be degradable, we recall that a quantum channel $\Mcal$ (as any open quantum system) interacts with its environment. The environment is also an open system and therefore a quantum channel. But this channel shares a lot of features with $\Mcal$~\cite{holevo2012quantum} and it is called a \emph{complementary channel} to $\Mcal$. Then, a channel $\Mcal$ is called degradable if it enjoys a nontrivial property that its complementary channel $\wh\Mcal$ is given by the action of $\Mcal$ itself and another channel $\Dcal$ such that $\wh\Mcal=\Dcal\circ\Mcal$. Hence, $\Mcal$ can be \emph{degraded} to emulate the action of its own environment. As it turns out, many natural physical processes are in fact represented by degradable channels. To name a few, the example is a trivial noiseless channel, the effect of decoherence for a two-level quantum system modeled as a dephasing channel~\cite{devetakshor2005capacity,bennett1997capacities}, the qubit amplitude damping channel describing the information propagation in a spin network~\cite{giovannetti2005information}, optimal universal cloning machines~\cite{bradler2011infinite,bradler2010conjugate}, attenuation and amplification one-mode Gaussian optical channels~\cite{holevo2012quantum} and some fundamental processes from the realm of quantum field theory in  curved spacetime~\cite{bradler2014capacity}.

Apart from its physical prominence, degradable channels play a vital role in the mathematical theory of quantum communication whose central task
is the ultimate rate of reliable quantum communication. This is characterized by the quantum channel capacity~\cite{devetak2005capacity,shor2002quantum,lloyd1997capacity,barnum1998information}. Quantum channel capacity is a fundamental physical quantity that characterizes the ability of a quantum system to coherently transfer a quantum message between a sender and receiver. A great deal of effort has been invested in understanding of its properties~\cite{holevo2012quantum,smith2008quantum}. The problem is that except for degradable channels, the quantum capacity is virtually incalculable.

Interestingly, there is no unambiguous method known to the author to decide whether a degrading channel does \emph{not} exist. More precisely, the non-existence can be unambiguously decided if only if the channel, whose degradability we investigate, satisfies a certain criterion. Here in Sec.~\ref{sec:IFFderivation} we derive the criterion and our approach is based on the representation of quantum channels known as linear superoperators~\cite{havel2003robust,zyczkowski2004duality} introduced in Sec.~\ref{sec:prelim}. If the criterion is not met, we discuss the possibilities of how to proceed in order to disprove degradability but do not provide a conclusive method. That seems to be an interesting open problem. Note that the opposite task of showing degradability is easy even if the criterion is not satisfied: one just needs to find a single instance of a degrading channel and the superoperator formalism is by far the most suitable instrument.

Showing the non-existence of a degrading channel was part of some previous works. Ref.~\cite{smith2007degenerate} uses the superoperator formalism like we do but does not mention whether the calculation  is conclusive (it turns out that it is). Ref.~\cite{cubitt2008structure}, on the other hand, visits this issue more than once and a connection to the non-uniqueness of a degrading channel is emphasized. It is even possible that the criterion derived here is known to the authors  but it is never stated in full clarity as an iff condition (see Sec. II. A of~\cite{cubitt2008structure}).

We illustrate and use the necessary and sufficient condition to extend the parameter range where the complementary channel to the qubit and qutrit transpose depolarizing channel~\cite{datta2006complementarity} is degradable. We show that it contains a subset corresponding to an important class of channels known as the optimal  asymmetric cloners. We are thus able to calculate their quantum capacity in Sec.~\ref{sec:IFFderivation}. Finally, we discuss the implications of our result for the effort of finding exclusively conjugate degradable channels. They form  a different class of channels from degradable channels and their quantum capacity is calculable~\cite{bradler2010conjugate}. It remains to be shown, however, whether it is a mere proper subset of degradable channels~\cite{bradler2010conjugate}. The definition is recalled in Sec.~\ref{sec:prelim} and their link to the properties of bound entangled states derived in~\cite{horodecki2000operational} is further explored.

\section{Preliminaries}\label{sec:prelim}

In this paper we will make an extensive use of two  representations of completely positive maps: the Choi-Jamio\l kowski and superoperator formalism. The two formalisms are essentially identical but it makes sense to distinguish between them as each has advantages the other one lacks. In short, a Choi matrix gives up a quick check whether the map is CP and a linear superoperator is suitable for map composition and inversion by virtue of the standard matrix operations (matrix multiplication and the generalized inverse). The relation between the representations was investigated in~\cite{zyczkowski2004duality} and  we will summarize the most relevant findings. We also point to a few differences in the convention used in this paper, in particular, we will draw the reader's attention to how linear superoperators act on realigned density matrices. For the sake of completeness we will also recall some standard definitions from linear algebra~\cite{horn1991topic}.

\subsection{Some operations on complex matrix spaces}
Let $a\in\bbC^d$ be a $d$-tuple of complex numbers $a=(a_1,\dots,a_d)$. Then $\bbC^d$ is a complex vector space and we assume the choice of the canonical basis. It comes equipped with the inner product
\begin{equation}\label{eq:DotProduct}
  (a,b)\df\sum_{i=1}^d\overline{a}_ib_i,
\end{equation}
where the bar denotes complex conjugation. The inner product has some well documented properties and promotes $\bbC^d$ to a concrete realization of a finite-dimensional abstract Hilbert space. The induced norm $\|a\|_2^2=(a,a)$ is known as the Hilbert-Schmidt or Frobenius norm.  Let $\mathscr{B}(\bbC^d)$ denote the algebra of $d\times d$ complex matrices $\euM_{d}(\bbC)$. $\mathscr{B}(\bbC^d)$ is itself a Hilbert space if equipped with the Hilbert-Schmidt inner product
\begin{equation}\label{eq:HSproduct}
(A,B)\df\Tr{}[A^\dg B],
\end{equation}
where $\dg$ is defined as entry-wise complex conjugation followed by matrix transposition (we will use the symbol $\top$). Again, matrix transposition requires the basis for $A,B\in\euM_{d}(\bbC)$ to chosen and it will be the canonical one $e_{ij}$, with one on the position $(i,j)$ and zero everywhere else. The quantum-mechanical notation $e_{ij}=\kbr{i}{j}$ will be used. The induced matrix norm $\|A\|^2=\Tr{}[A^\dg A]$ is equivalent to the Hilbert-Schmidt norm as can be demonstrated using the following definition.
\begin{defi}\label{def:rowcol}
  Let $\euM_{d_1,d_2}(\bbC)$ be the \emph{set} of $d_1\times d_2$ matrices over $\bbC$ and $A\in\euM_{d_1,d_2}(\bbC)$. We introduce the ``row'' map $\row:\euM_{d_1,d_2}(\bbC)\mapsto\euM_{1,d_1d_2}(\bbC)$ as
  \begin{equation}\label{eq:flatOp}
    \row{[A]}=(A_{11},\dots,A_{1d_2},\dots,A_{d_11},\dots,A_{d_1d_2})
  \end{equation}
  and the ``column'' map $\col:\euM_{1,d_1d_2}(\bbC)\mapsto\euM_{d_1d_2,1}(\bbC)$,  is defined as
    \begin{equation}\label{eq:colOp}
      \col{[A]}=\row{[A]}^{\top}.
    \end{equation}
\end{defi}
Note that $\euM_d\equiv\euM_{d,d}$.
\begin{rem}
  Anticipating the next subsection, we  write
    \begin{subequations}\label{eq:tensorComponentsEx}
    \begin{align}
      A&=A_{k\mu}\kbr{k}{\mu}, \\
      \row{[A]} & = A_{k\mu}\bra{k}\otimes\bra{\mu}\equiv A_{k\mu}\bra{k\mu},\label{eq:tensorComponentsExB}\\
      \col{[A]} & = A_{k\mu}\ket{k}\otimes\ket{\mu}\equiv A_{k\mu}\ket{k\mu},
    \end{align}
    \end{subequations}
    where $1\leq k\leq d_1$ and $1\leq\mu\leq d_2$.
\end{rem}
The  equivalence between $\euM_{1,d^2}(\bbC)$ and $\euM_{d}(\bbC)$ equipped with the inner product Eq.~(\ref{eq:DotProduct}) and~(\ref{eq:HSproduct}), respectively, is then revealed by
\begin{equation}\label{eq:innerProdEquiv}
  \Tr{}[A^\dg B]=\Tr{}[BA^\dg]=\sum_{k,\mu=1}^{d}\overline{A}_{k\mu}B_{k\mu}\equiv(\col{A},\col{B}),
\end{equation}
where $i\equiv k\mu$. The first instance where the above mappings will be used is the following identity~\cite{horn1991topic}.
\begin{lem}\label{lem:columnization}
  Let $A\in\euM_{d_1,d_2}(\bbC),B\in\euM_{d_2,d_3}(\bbC)$ and $C\in\euM_{d_3,d_4}(\bbC)$.
  Then
  \begin{equation}\label{eq:flattenIdentity}
   \col{\big[(ABC)^\top\big]}=(C^\top\otimes A)\col{\big[B^\top\big]}.
  \end{equation}
\end{lem}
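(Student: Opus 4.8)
The plan is to verify the identity entrywise, using the tensor-component notation introduced in Eq.~(\ref{eq:tensorComponentsEx}). First I would write out the left-hand side. For $A\in\euM_{d_1,d_2}(\bbC)$, $B\in\euM_{d_2,d_3}(\bbC)$, $C\in\euM_{d_3,d_4}(\bbC)$, the product $ABC$ is a $d_1\times d_4$ matrix with entries $(ABC)_{i\ell}=\sum_{j,k}A_{ij}B_{jk}C_{k\ell}$, so $(ABC)^\top$ is the $d_4\times d_1$ matrix with $\big[(ABC)^\top\big]_{\ell i}=\sum_{j,k}A_{ij}B_{jk}C_{k\ell}$. Applying $\col[\,\cdot\,]$, which by Definition~\ref{def:rowcol} stacks the rows into a single column, the component indexed by the pair $(\ell,i)$ (with $1\le\ell\le d_4$, $1\le i\le d_1$) is exactly $\sum_{j,k}A_{ij}B_{jk}C_{k\ell}$.

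Next I would expand the right-hand side. Here $B^\top\in\euM_{d_3,d_2}(\bbC)$ has entries $(B^\top)_{kj}=B_{jk}$, so $\col[B^\top]$ is a column vector of length $d_3 d_2$ whose $(k,j)$-component is $B_{jk}$. The operator $C^\top\otimes A$ maps $\euM_{d_3 d_2,1}$ to $\euM_{d_4 d_1,1}$, and its matrix element connecting output index $(\ell,i)$ to input index $(k,j)$ is $(C^\top)_{\ell k}\,A_{ij}=C_{k\ell}\,A_{ij}$. Therefore the $(\ell,i)$-component of $(C^\top\otimes A)\col[B^\top]$ is $\sum_{k,j}C_{k\ell}A_{ij}B_{jk}$, which coincides with the left-hand side computed above. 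Since all components agree, the identity follows.

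The only genuine point requiring care — and the step I expect to be the main (if modest) obstacle — is bookkeeping of the index conventions: one must be consistent about the fact that $\row$, hence $\col$, orders the combined index so that the \emph{first} tensor factor is the slow-running (outer) index, as made explicit in Eq.~(\ref{eq:tensorComponentsExB}), and correspondingly that in $C^\top\otimes A$ the factor $C^\top$ carries the outer index block and $A$ the inner one. With the ordering fixed the computation is a one-line matching of summands; choosing the opposite ordering would instead produce $(A\otimes C^\top)$, so the placement of the transposed and non-transposed factors in Eq.~(\ref{eq:flattenIdentity}) is dictated entirely by Definition~\ref{def:rowcol}. One can equally phrase the argument basis-freely: both sides are linear in $B$, so it suffices to check the claim on rank-one $B=\kbr{j}{k}$, where $ABC=(A\ket{j})(C^\top\ket{k})^\top$ up to conjugation and the identity reduces to the standard vectorization rule $\col[(uv^\top)^\top]=\col[v u^\top]=u\otimes v$; I would include whichever version reads more cleanly alongside the surrounding material.
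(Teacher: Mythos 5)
Your entrywise verification is correct, and the index bookkeeping is handled properly: with the row-major $\col$ convention of Definition~\ref{def:rowcol} the combined index of $\col[(ABC)^\top]$ is $(\ell,i)$ with $\ell$ outer, the $(\ell,i;k,j)$ entry of $C^\top\otimes A$ is $C_{k\ell}A_{ij}$, and both sides reduce to $\sum_{j,k}A_{ij}B_{jk}C_{k\ell}$. Note that the paper itself gives no proof of this lemma --- it is quoted from the reference \cite{horn1991topic} --- so there is nothing to compare against; your computation (or the equivalent rank-one reduction $B=\kbr{j}{k}$, where the phrase ``up to conjugation'' is superfluous since only transposition, not conjugation, is involved) is the standard argument and correctly identifies the placement of $C^\top$ versus $A$ as being forced by the row-stacking order.
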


One of the key characteristics of any linear map $L:U\mapsto V$ is its \emph{rank} defined as $\rank{L}=\dim{[\im{L}]}$. Rank has many useful properties~\cite{zhang2011matrix} and we summarize some of them. Even more can be said if we think about a matrix $A\in\euM_{d_1,d_2}(\bbC)$ as a concrete realization of a linear map $L$ in the introduced complex vector space, i.e. $A:\bbC^{d_1}\mapsto\bbC^{d_2}$.
\begin{lem}\label{lem:rankProps}
  Let $L:U\mapsto V$ be a linear map and $A,B$ two specific (matrix) linear maps $A:\bbC^{d_1}\mapsto\bbC^{d_2}$ and $B:\bbC^{d_3}\mapsto\bbC^{d_4}$. Then the following properties hold:
  \begin{enumerate}[label=(\roman*)]
    \item $\rank{L}=\dim{U}-\dim{[\ker{L}]}$.\label{lem:rankProps1}
    \item $\rank{A}$ equals the number of nonzero singular values.\label{lem:rankProps2}
    \item $\rank{A}\leq\min{[d_1,d_2]}$.\label{lem:rankProps3}
    \item $\rank{[AB]}\leq\min{[\rank{A},\rank{B}]}$.\label{lem:rankProps4}
    \item $\rank{[A\otimes B]}=\rank{A}\rank{B}$.\label{lem:rankProps5}
    \item $\rank{[A^{-1}]}=\rank{A}$, where $A^{-1}$ denotes the generalized inverse~\cite{ben2003generalized} of $A$.\label{lem:rankProps6}
    \item $\rank{[A^\top]}=\rank{A}$.\label{lem:rankProps7}
    \item Let $C\in\euM_{d_1}\otimes\euM_{d_2}$ and so $C=\sum_{k=1}^nA^{(1)}_k\otimes A^{(2)}_k$. Then the partial transpose over the first subsystem defined (in the canonical basis) as
        $$
        C^{\top_1}=\sum_{k=1}^n\big[A^{(1)}_k\big]^\top\otimes A^{(2)}_k
        $$
        does not preserve rank.\label{lem:rankProps8}
  \end{enumerate}
\end{lem}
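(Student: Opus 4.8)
The plan is to dispatch items \ref{lem:rankProps1}--\ref{lem:rankProps7} as standard linear-algebra facts (cf.\ \cite{horn1991topic,zhang2011matrix,ben2003generalized}) and to reserve the actual work for item \ref{lem:rankProps8}, the only non-routine assertion.

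First I would settle \ref{lem:rankProps1} by the rank--nullity theorem: $\dim U=\dim[\ker L]+\dim[\im L]$, while $\rank L=\dim[\im L]$ by definition. Item \ref{lem:rankProps2} follows from a singular value decomposition $A=V\Sigma W^\dg$ with $V,W$ unitary and hence rank-preserving, so $\rank A=\rank\Sigma$ equals the number of nonzero diagonal entries of $\Sigma$; item \ref{lem:rankProps3} is then immediate because a $d_1\times d_2$ matrix carries at most $\min[d_1,d_2]$ singular values. I would take \ref{lem:rankProps7} (row rank equals column rank) next, as it streamlines \ref{lem:rankProps4}: $\im[AB]\subseteq\im A$ gives $\rank[AB]\leq\rank A$, and $\rank[AB]=\rank[(AB)^\top]=\rank[B^\top A^\top]\leq\rank[B^\top]=\rank B$ finishes it. For \ref{lem:rankProps5} I would combine $A=V_A\Sigma_A W_A^\dg$ and $B=V_B\Sigma_B W_B^\dg$ into $A\otimes B=(V_A\otimes V_B)(\Sigma_A\otimes\Sigma_B)(W_A\otimes W_B)^\dg$, a singular value decomposition of $A\otimes B$ whose nonzero singular values are precisely the pairwise products, so their number is $\rank A\cdot\rank B$. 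Item \ref{lem:rankProps6} I would obtain from the defining identities $A=AA^{-1}A$ and $A^{-1}=A^{-1}AA^{-1}$ of the generalized inverse, which together with \ref{lem:rankProps4} sandwich $\rank A$ and $\rank[A^{-1}]$ between one another.

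The substantive part is \ref{lem:rankProps8}, which I would establish by exhibiting one explicit counterexample rather than proving anything general. Take $d_1=d_2=d$ and let $C=\col{[\id_d]}\,\col{[\id_d]}^\dg=\sum_{i,j=1}^d\kbr{i}{j}\otimes\kbr{i}{j}$, the (unnormalised) maximally entangled projector, so that $\rank C=1$. Applying $(\kbr{i}{j})^\top=\kbr{j}{i}$ term by term in this tensor expansion, the partial transpose over the first factor is $C^{\top_1}=\sum_{i,j=1}^d\kbr{j}{i}\otimes\kbr{i}{j}$, which is exactly the swap operator $\ket{a}\otimes\ket{b}\mapsto\ket{b}\otimes\ket{a}$. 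Being a permutation matrix the swap is unitary, hence of full rank, so $\rank[C^{\top_1}]=d^2\neq1=\rank C$ and $\top_1$ does not preserve rank.

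The hard part---really only a mild bookkeeping subtlety---is the index manipulation in \ref{lem:rankProps8}: one must verify carefully that applying the canonical-basis partial transpose term by term to a tensor expansion of $C$ genuinely reproduces the swap operator and not some unrelated matrix that merely shares its Frobenius norm. Once $(\kbr{i}{j})^\top=\kbr{j}{i}$ is used cleanly and the resulting double sum is recognised as the swap, the conclusion drops out; everything else needs no more than the cited references.
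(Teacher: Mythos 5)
Your proposal is correct and follows essentially the same route as the paper: items \ref{lem:rankProps1}--\ref{lem:rankProps7} are treated as standard facts, and item \ref{lem:rankProps8} is settled by the same counterexample the paper uses, namely the rank-one unnormalised maximally entangled projector $\sum_{i,j}\kbr{i}{j}\otimes\kbr{i}{j}$ (the paper's $\sum_{i,j}\kbr{ii}{jj}$ with $d=2$), whose partial transpose is the full-rank swap operator. Your version merely generalises the example to arbitrary $d$ and identifies the partially transposed matrix explicitly as the swap, which the paper leaves implicit.
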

\begin{proof}{\ref{lem:rankProps8}}
  An example can easily be found. For $C=\sum_{i,j=0}^1\kbr{ii}{jj}$ we get $\rank{C}=1$ whereas $\rank{[C^{\top_1}]}=4$.
\end{proof}

\subsection{The formalism of finite-dimensional quantum mechanics (briefly)}
Before we put Eq.~(\ref{eq:flattenIdentity}) to use, let's recall some basic building blocks of (mostly finite-dimensional) quantum mechanics~\cite{keyl2002fundamentals,holevo2012quantum}. This will also put the previously introduced matrix operations in a broader context.
Traditionally, the elements of $\mathscr{B}(\bbC^d)$ are called \emph{effects} defined as
\begin{equation}\label{eq:effects}
  \mathscr{F}(\mathscr{B}(\bbC^d))\df\{A\in\mathscr{B};0\leq A\leq\id\},
\end{equation}
where $\id$ is an identity operator. A simple example of an effect is a projector and a generic effect is a POVM element. The set of \emph{states} $\mathscr{S}(\mathscr{B}(\bbC^d))$ is defined as linear functionals $\vr\in\mathscr{S}$ over $\mathscr{B}(\bbC^d)$ by imposing
\begin{align}\label{eq:states}
  \vr(\id)&=1, \nn\\
  \vr(A)&\geq0 \quad (\forall A\geq 0)\in\mathscr{B}(\bbC^d).
\end{align}
The corresponding dual space of states $\mathscr{B}^*$ is paired with $\mathscr{B}$ itself via the inner product introduced earlier in Eq.~(\ref{eq:HSproduct})
\begin{equation}\label{eq:dualIdentification}
  \vr(A)\equiv\Tr{}[\vr A],
\end{equation}
where we will abuse the notation by using the same symbol $\vr$ for an element of $\msc{B}^*$ and $\msc{B}$\footnote{The second requirement of~(\ref{eq:states}) (positive semi-definiteness) singles out a subset of \emph{self-adjoint} elements of $\msc{B^*}$, that is, elements invariant under a star involution. This is because $\msc{B^*}$ is also a $C^*$-algebra. The involution is represented as the $\dg$ operator used earlier in Eq.~(\ref{eq:HSproduct}) and consequently in Eq.~(\ref{eq:dualIdentification}) we assumed~$\vr=\vr^\dg$.}.
Put differently, the bilinear form $\msc{B^*}\times\msc{B}\mapsto\bbC$ (on the left) is equivalent to the Hilbert-Schmidt inner product $\msc{B}\times\msc{B}\mapsto\bbC$ (on the right) by virtue of a complex anti-isomorphism $\msc{B}\mapsto\msc{B}^*$. We proceed in a similar vein for maps between Hilbert spaces and their dual maps. Let $P\in\mathscr{F}(\mathscr{B}(\bbC^{d_B}))$ be an effect and  $M:\mathscr{F}(\mathscr{B}(\bbC^{d_B}))\mapsto\mathscr{F}(\mathscr{B}(\bbC^{d_A}))$ a linear, positive map. Then, the relation
\begin{equation}\label{eq:HeisVSSchr}
\vr(M\circ P)=(M^*\circ\vr)(P)
\end{equation}
defines the dual map $M^*:\mathscr{S}(\mathscr{B}(\bbC^{d_A}))\mapsto\mathscr{S}(\mathscr{B}(\bbC^{d_B}))$. Linearity and the spectral theorem extend the set of effects $\mathscr{F}(\mathscr{B}(\bbC^{d_B}))$ to Hermitian operators (observables) $O\in\mathscr{B}(\bbC^{d_B})$ and from Eqs.~(\ref{eq:HeisVSSchr}) and (\ref{eq:dualIdentification}) we obtain
\begin{equation}\label{eq:dualIdent2}
  \Tr{}[\vr (M\circ O)]=\Tr{}[({M^*\circ\vr}) O],
\end{equation}
where the abuse has been committed again. The map $M^*$ is dual of $M$ but it is actually $M$ that is called the adjoint or \emph{Heisenberg dual} of $M^*$ which itself is a Schr\"odinger evolution operator. Well, almost. It turns out that the RHS of Eq.~(\ref{eq:dualIdent2}) would not be positive for some multipartite states $\vr$ without the restriction to a subset of positive maps called \emph{completely positive (CP) maps} or \emph{quantum channels}. They are defined as positive maps $M^*$ with the additional constraint given by
\begin{equation}\label{eq:CPmap}
   (M^*\otimes\id)\circ\vr_{AR}\geq0
\end{equation}
valid for all $\vr_{AR}\in\mathscr{S}(\mathscr{B}(\bbC^{d_A}\otimes\bbC^{d_R}))$ and $d_A,d_R<\infty$. So it is the set of CP maps that represents physically  sensible evolution operators.

\begin{nota}
In the spirit of Eq.~(\ref{eq:dualIdentification}) we will understand $\mathscr{S}(\mathscr{B}(\bbC^{d_A}))$ as the set of all positive semi-definite matrices of trace one (density matrices) and simplify the notation by writing $\mathscr{S}(A)$  and for multipartite systems $\mathscr{S}(AB\dots Z)$, where, for instance $AB\equiv\msc{B}(\bbC^{d_A})\otimes\msc{B}(\bbC^{d_B})$. To avoid dragging the star  we denote quantum channels in the Schr\"odinger picture by the calligraphic font and so $\Mcal:\mathscr{S}(A)\to\mathscr{S}(B)$ is understood as the action of $M^*$ below (\ref{eq:HeisVSSchr}). We will also reserve the letter $d$ for the Hilbert space dimension, so e.g. $\dim{A}=d_A$.
\end{nota}

To conclude this section we formally introduce a positive map $\C:\mathscr{S}(A)\to\mathscr{S}(A')$ called complex conjugation, where the canonical basis for matrix space $A\simeq A'$ is implicitly present. Complex conjugation is not CP and coincides with the transposition map $\top$ for Hermitian matrices but it is advantageous to introduce distinct notation (cf. the difference between Lemma~\ref{lem:rankProps}, item~\ref{lem:rankProps8}, and Lemma~\ref{lem:ConjDegrSuperOp}).

\subsection{Two (in fact three) representations of a quantum channel}
A remarkable way of representing a quantum channel is known as the Choi-Jamio\l\-kowski isomorphism~\cite{choi1975completely,jamiolkowski1972linear}. Let $\Mcal:\mathscr{S}(A)\to\mathscr{S}(B)$ be the quantum channel. Then there exists a positive semi-definite map $R_\Mcal\in\mathscr{S}(AB)$, sometimes called \emph{Choi matrix},  that represents the action of the channel  via\footnote{We will use the same symbol $\id$ for maps $\id:\euM_d\mapsto\euM_d$ and unnormalized density matrices in which case a subscript denoting the Hilbert space will be attached. Hence ${\id_B\over\Tr{}[\id_B]}\in\msc{S}(B)$.}
\begin{equation}\label{eq:JamiChoi}
  \Mcal\circ\vr_A=\Tr{A}\big[(\vr_A^\top\otimes\id_{B})R_\Mcal\big].
\end{equation}
The channel $\Mcal$ is trace-preserving if its Choi matrix satisfies $\Tr{B}R_\Mcal=\id_A$. Conversely (and this is the trivial direction), any quantum channel $\Mcal:\mathscr{S}(A')\to\mathscr{S}(B)$ ($A'\simeq A$) gives rise to a Choi matrix
\begin{equation}\label{eq:ChoiConverse}
  R_\Mcal=(\id_A\otimes\Mcal)\circ\Phi_{AA'},
\end{equation}
where $\Phi_{AA'}=\sum_{i=1}^{d_A}\ket{i}_A\ket{i}_{A'}$ is an unnormalized maximally entangled state. The big practical advantage of the Choi-Jamio\l kowski formalism is that the verification of complete positivity of a quantum channel reduces to the positive semi-definiteness  of the Choi matrix. At first, the existence of this formalism may seem mysterious. But as investigated in~\cite{zyczkowski2004duality,havel2003robust}, it is closely related to the action of $\Mcal$ represented as a (linear) superoperator. The abundance of details can be found in the cited papers so here we only offer an executive summary. A density matrix $\vr_A\in\msc{S}(A)$ is a  rank-2 tensor and so any CP map  must be representable as a rank-4 tensor. The Choi matrix is indeed a rank-4 tensor and the idea is to realign it to a different rank-4 tensor that will be considered as a linear mapping belonging to $\euM_{d_A^2,d_B^2}$. In this form it will act on a reordered density matrix from $\bbC^{d_A^2}$. This is essentially the superoperator formalism and one of its main advantages is that the action of a quantum channel on a density matrix as well as the composition of two channels are represented by regular matrix multiplication.

There is freedom in the way a linear map can act: from the left or right. Our choice will be to act from the right on the flattened density matrices given by Eq.~(\ref{eq:flatOp}). To this end, let $\Mcal:\msc{S}(A)\to\msc{S}(B)$ be a quantum channel and $R_\Mcal\in\euM_{d_Ad_B}(\bbC)$ its Choi matrix. Since it is a rank-4 tensor we write $R_\Mcal\equiv R_{k\ell;\mu\nu}$ and the channel subscript has been omitted. The interpretation of the indices follows from the convention used in Eq.~(\ref{eq:JamiChoi}) (and correspondingly in~(\ref{eq:ChoiConverse})): the basis of $R_\Mcal$ is ordered as $\kbr{k}{\mu}_A\otimes\kbr{\ell}{\nu}_B$. Then, a superoperator (linear map) $\euM_{d_A^2,d_B^2}\ni\msf{M}:\bbC^{d_A^2}\mapsto\bbC^{d_B^2}$ is obtained by realigning the Choi matrix
\begin{equation}\label{eq:ChoiVSSuperop}
  R_{k\ell;\mu\nu}\to\msf{M}_{k\mu;\ell\nu}.
\end{equation}
We will always use the sans serif font family for the matrix superoperators. The basis of $\msf{M}$ is therefore ordered as
$_A| k\rangle\!\langle \ell |_B\otimes\,_A|\mu\rangle\!\langle \nu |_B\equiv\ket{k\mu}_A\otimes\bra{\ell\nu}_B$. The superoperator in this form is computationally useful since it acts on the transformed input density matrix $\row{[\vr]}$ by $\msf{M}$ from the right (recall that $\row:\euM_{d_A}(\bbC)\mapsto\euM_{1,d_A^2}(\bbC)$):
\begin{equation}\label{eq:SuperOpAction}
  \Mcal\circ\vr\equiv\row{[\vr]}\msf{M}=\vr_{k\mu}\bra{k\mu}\msf{M}_{k\mu;\ell\nu}\ket{k\mu}\otimes\bra{\ell\nu}=\s_{\ell\nu}\bra{\ell\nu}=\row{[\s]}\equiv\s,
\end{equation}
see Eq.~(\ref{eq:tensorComponentsExB}) to decipher the notation. The summation over the repeated \emph{pairs} of indices $k$ and $\mu$ is understood. For two quantum channels $\Mcal$ and $\Ncal$, their composition $\Ncal\circ\Mcal$ is equally easy to obtain as the regular matrix product $\msf{M}\msf{N}$ (be aware of the convention of acting from the right).

The last representation of quantum channels we need to mention is an important result from the times long before quantum information theory existed~\cite{stinespring1955positive}.
\begin{thm}[Stinespring dilation]
    For every completely positive map $\Mcal:\msc{S}(A)\to\msc{S}(B)$ there exists a partial isometry $W_\Mcal:A\to BE$ s.t.
    \begin{equation}
        \Mcal\circ\vr=\Tr{E}[W_\Mcal\vr W_\Mcal^\dg],\quad\forall\vr\in\msc{S}(A).
    \end{equation}
\end{thm}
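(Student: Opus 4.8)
The plan is to read the dilation straight off the Choi matrix, since complete positivity of $\Mcal$ is encoded there as positive semi-definiteness: specializing the definition~(\ref{eq:CPmap}) to the maximally entangled state appearing in~(\ref{eq:ChoiConverse}) gives $R_\Mcal=(\id_A\otimes\Mcal)\circ\Phi_{AA'}\geq0$ on $\bbC^{d_A}\otimes\bbC^{d_B}\simeq\bbC^{d_Ad_B}$. (By contrast the realigned superoperator $\msf{M}$ of~(\ref{eq:ChoiVSSuperop}) is generically not positive semi-definite, realignment failing to preserve positivity just as the partial transpose fails to preserve rank in Lemma~\ref{lem:rankProps}, item~\ref{lem:rankProps8}, which is why I would not start there.) Being positive semi-definite, $R_\Mcal$ admits a spectral decomposition with exactly $r=\rank R_\Mcal$ strictly positive eigenvalues by Lemma~\ref{lem:rankProps}, item~\ref{lem:rankProps2}; absorbing the square roots of the eigenvalues into the eigenvectors I would write $R_\Mcal=\sum_{k=1}^{r}\kbr{\psi_k}{\psi_k}$ with $\ket{\psi_k}\in\bbC^{d_Ad_B}$.

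The crux is to recognize these (reshaped) eigenvectors as Kraus operators. Setting $\ket{\psi_k}=\col{[M_k^\top]}$ with $M_k\in\euM_{d_B,d_A}(\bbC)$ (Definition~\ref{def:rowcol}), I would insert the spectral form of $R_\Mcal$ into the defining identity~(\ref{eq:JamiChoi}) and apply Lemma~\ref{lem:columnization} to the triple $(\id_B,M_k,\vr)$, which rewrites $(\vr^\top\otimes\id_B)\col{[M_k^\top]}$ as $\col{[(M_k\vr)^\top]}$; carrying out $\Tr{A}$ on the resulting rank-one terms and evaluating the induced Hilbert-Schmidt pairing via~(\ref{eq:innerProdEquiv}) collapses everything to the operator-sum form $\Mcal\circ\vr=\sum_{k=1}^{r}M_k\vr M_k^\dg$. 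The identical reshaping applied to the trace-preservation condition $\Tr{B}R_\Mcal=\id_A$ yields $\sum_{k=1}^{r}M_k^\dg M_k=\id_A$.

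From the Kraus operators the dilation is built by stacking. I would let $E\simeq\bbC^{r}$ carry an orthonormal basis $\{\ket{k}_E\}_{k=1}^{r}$ and define $W_\Mcal:A\to BE$ by $W_\Mcal\ket{\phi}=\sum_{k=1}^{r}(M_k\ket{\phi})\otimes\ket{k}_E$, i.e.\ $W_\Mcal=\sum_{k}M_k\otimes\ket{k}_E$. Then $W_\Mcal^\dg W_\Mcal=\sum_{k,\ell}M_k^\dg M_\ell\,\brk{k}{\ell}=\sum_{k}M_k^\dg M_k=\id_A$, so $W_\Mcal$ is an isometry, hence a partial isometry, and $\Tr{E}[W_\Mcal\vr W_\Mcal^\dg]=\sum_{k,\ell}(M_k\vr M_\ell^\dg)\,\Tr{E}[\kbr{k}{\ell}]=\sum_{k}M_k\vr M_k^\dg=\Mcal\circ\vr$, which is the assertion (with $\dim E=r=\rank R_\Mcal$ the smallest workable choice). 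For a merely trace-non-increasing CP map I would enlarge $E$ by one dimension and adjoin a Kraus operator $M_0$ with $M_0^\dg M_0=\id_A-\sum_{k}M_k^\dg M_k\geq0$, mapping into an extra orthogonal block of $B$, to restore $W_\Mcal^\dg W_\Mcal=\id_A$.

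The one genuinely delicate step is the transpose/conjugation bookkeeping in $\ket{\psi_k}\leftrightarrow M_k$: the defining relation~(\ref{eq:JamiChoi}) already carries a transpose on the input state and the index ordering fixed in~(\ref{eq:ChoiVSSuperop}) dictates which factor of $R_\Mcal$ is $A$ and which is $B$, so the vectorization must be undone on the right subsystem and with the right conjugation, on pain of landing on $\sum_{k}\overline{M_k}\,\vr\,M_k^\top$ or a partially transposed impostor instead of $\sum_{k}M_k\vr M_k^\dg$. Lemma~\ref{lem:columnization} together with~(\ref{eq:innerProdEquiv}) is exactly the apparatus that renders this bookkeeping mechanical once the conventions of Sec.~\ref{sec:prelim} are fixed; the remaining ingredients --- existence of the spectral decomposition, $\Tr{E}[\kbr{k}{\ell}]=\delta_{k\ell}$, linearity --- are routine.
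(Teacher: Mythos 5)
The paper does not prove this statement at all: it quotes the Stinespring dilation as a classical result and simply cites \cite{stinespring1955positive}, so there is no in-paper argument to compare against. Your proof is the standard finite-dimensional route --- extract Kraus operators from a spectral decomposition of the positive semi-definite Choi matrix, then stack them into an isometry $W_\Mcal=\sum_k M_k\otimes\ket{k}_E$ --- and it is correct. The bookkeeping you flag as delicate does come out right under the paper's conventions: with $\ket{\psi_k}=\col{[M_k^\top]}$ and $M_k\in\euM_{d_B,d_A}(\bbC)$, Lemma~\ref{lem:columnization} applied with $A=\id_B$, $B=M_k$, $C=\vr$ indeed turns $(\vr^\top\otimes\id_B)\col{[M_k^\top]}$ into $\col{[(M_k\vr)^\top]}$, and the partial trace over $A$ of the resulting rank-one terms produces $M_k\vr M_k^\dg$ rather than a conjugated impostor. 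Your choice $\dim E=\rank R_\Mcal$ also matches the paper's later remark that the Choi rank is the minimal environment dimension.

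Two small caveats. First, as literally stated ("every completely positive map") the theorem cannot hold with a partial isometry: a CP map that is not trace-non-increasing, e.g.\ $\vr\mapsto 2\vr$, would force $W_\Mcal^\dg W_\Mcal=2\id_A$. Your proof implicitly (and correctly) uses that $\Mcal:\msc{S}(A)\to\msc{S}(B)$ maps states to states, hence is trace-preserving, so $W_\Mcal$ is a genuine isometry; that is the reading the paper intends. Second, your patch for the trace-non-increasing case is slightly off as written --- the extra Kraus operator $M_0$ should be taken as a square root of the positive operator $\id_A-\sum_k M_k^\dg M_k$ and sent to an enlarged \emph{environment} (or an enlarged output that is afterwards compressed), not to "an extra orthogonal block of $B$" while still claiming $\Tr{E}[W_\Mcal\vr W_\Mcal^\dg]=\Mcal\circ\vr$ on the nose. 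Since that case is not needed for the theorem as used in the paper, this does not affect the validity of your main argument.
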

The channel $\Mcal$ is represented by the action of an isometry $W_\Mcal$ whose target Hilbert space is $BE$. It is sometimes called a \emph{purification} of $\Mcal$. To provide a physical interpretation of the $E$ subsystem, we observe that it is the ``rest of the universe'' (i.e.~environment) that purifies the open quantum dynamics described by $\Mcal$. In principle, the map to the environment is no less significant and this leads to the definition of the complementary channel to $\Mcal$.
\begin{defi}\label{def:Stinespr}
  Let $W_\Mcal:A\to BE$ be a purification of $\Mcal:\msc{S}(A)\to\msc{S}(B)$. Then the complementary channel $\wh{\Mcal}:\msc{S}(A)\to\msc{S}(E)$ is defined as
  \begin{equation}
        \wh{\Mcal}\circ\vr=\Tr{B}[W_\Mcal\vr W_\Mcal^\dg],\quad\forall\vr\in\msc{S}(A).
  \end{equation}
\end{defi}
The environment is unique up to a local isometry  on the $E$ subsystem and so one can find many different forms of $\wh{\Mcal}$. Certain salient features, such as degradability, are preserved, and we will review it in the next subsection. The environment dimension $\dim{E}$, on the other hand, can have any value which bounded from below by  $\rank{[R_\Mcal]}$  also known as the \emph{Choi rank} of $\Mcal$.
\begin{defi}\label{def:unitalCh}
  A quantum channel $\Mcal:\msc{S}(A)\mapsto\msc{S}(B)$ is unital if $\Mcal:{1\over\Tr{}[\id_A]}\id_A\mapsto{1\over\Tr{}[\id_B]}\id_B$.
\end{defi}
The reason for emphasizing the normalization constant will become clear in Example~\ref{exa:TDchannelAgain}. 

\subsection{Degradable and conjugate degradable channels and their relevance}\label{subsec:degrConjdegr}

The main purpose of the machinery from the previous subsections is to find out when it is easy to decide whether a quantum channel is (conjugate) degradable.
\begin{defi}\label{def:allmaps}
Let $\Mcal$ be a quantum channel and $\wh{\Mcal}$ its complementary channel. Then
\begin{enumerate}[label=(\roman*)]
      \item $\Mcal$ is degradable if there exists another channel $\Dcal$ such that $\Dcal\circ\Mcal=\wh\Mcal$. The map $\Dcal$ is called a degrading channel.\label{def:allmaps2}
      \item $\Mcal$ is conjugate degradable if there exists another channel $\overline{\Dcal}$ such that $\overline{\Dcal}\circ\Mcal=\C\circ\wh\Mcal$. The map $\overline{\Dcal}$ is called a conjugate degrading channel.
      \item $\Mcal$ is antidegradable if there exists another channel $\Acal$ such that $\Mcal=\Acal\circ\wh\Mcal$. The map $\Acal$ is called an antidegrading channel.
      \item $\Mcal$ is conjugate antidegradable if there exists another channel $\overline{\Acal}$ such that $\C\circ\Mcal=\overline{\Acal}\circ\wh\Mcal$. The map $\overline{\Acal}$ is called a conjugate antidegrading channel.
    \end{enumerate}
\end{defi}
\begin{figure}[t]
   \resizebox{8cm}{!}{\includegraphics{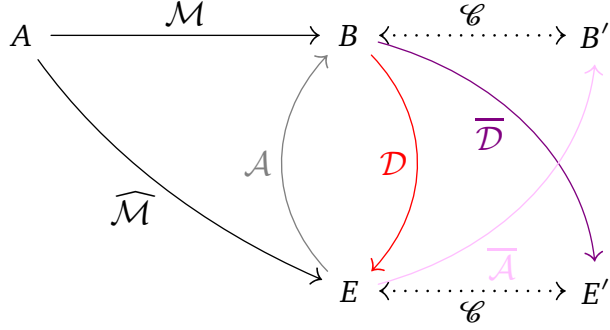}}
    \caption{The action of four maps $\Acal,\Dcal,\overline{\Dcal}$ and $\overline{\Acal}$ relating the output Hilbert spaces of the quantum channel $\Mcal$ and its complementary channel $\wh\Mcal$ is investigated. The channel $\Mcal$ maps the input density matrices from the Hilbert space denoted by $A$ to the output Hilbert space  $B$, whereas its complement's output is another Hilbert space $E$. The involutive map $\C$  acts by transposing (or complex conjugating) the density matrix in the $B$ or $E$ Hilbert space.}
    \label{fig:diagram}
\end{figure}
For a graphical depiction, see Fig.~\ref{fig:diagram}. Note that we can trivially relabel the channels such that $\Ncal=\wh\Mcal$ and so $\wh\Ncal=\Mcal$. Then, if $\Mcal$ is, for example, degradable,  it is equivalent to say that $\Ncal$ is antidegradable. Even though this notational ``permutation'' is trivial, we will often switch the point of view on what is a channel and its complement, especially in Sec.~\ref{sec:clonerCap}. Also note that the class of antidegradable channels has some advantageous properties compared to degradable channels~\cite{cubitt2008structure} and recently an insight into their structure has been gained from the game-theoretic perspective~\cite{buscemi2014game}.

The quantum capacity of a noisy quantum channel $\Mcal$~\cite{divincenzo1998quantum,barnum1998information,devetak2005capacity,lloyd1997capacity} defined as the maximal rate at which quantum information can be sent and perfectly recovered (in the units of bits per channel) is calculated by
\begin{equation}\label{eq:quantCap}
Q(\Mcal)=\lim_{n\to\infty}{1\over n}\max_{\vr}{Q^{(1)}(\Mcal^{\otimes n}(\vr))}
=\lim_{n\to\infty}{1\over n}\max_{\vr}{\big[H(\Mcal^{\otimes n}(\vr))-H(\wh{\Mcal}^{\otimes n}(\vr))\big]},
\end{equation}
where $\vr$ is an input state to $n$ copies of the quantum channel $\Mcal$ and its complement $\wh\Mcal$, the quantity $Q^{(1)}(\Mcal(\vr))$ is called the one-shot quantum capacity also known as the coherent information and $H(\vr)=-\Tr{}[\vr\log{\vr}]$ is the von Neumann entropy.
The magic of degradable channels lies in the observation~\cite{devetakshor2005capacity} that
\begin{equation}\label{eq:singleletterCap}
Q(\Mcal)=\max_{\vr}{Q^{(1)}(\Mcal(\vr))}=\max_\vr{[H(B)_\s-H(E)_\s]},
\end{equation}
where the succinct notation on the right side stresses the fact that the coherent information is maximized over the input ensemble $\vr_A$ but it is evaluated on $\s_{B(E)}$ living in the output Hilbert subspace $B$ and $E$ corresponding to  $\Mcal$ and $\wh\Mcal$, respectively. This is the content behind the statement that the channel capacity is \emph{single-letterized}. The same magic happens for conjugate degradable channels whose quantum capacity is given by Eq.~(\ref{eq:singleletterCap}) as well~\cite{bradler2010conjugate}. The similarity does not end here. If a channel is antidegradable, its quantum capacity is zero. Conjugate antidegradable channels satisfy the same property~\cite{bradler2010conjugate}.

\section{The (non-)uniqueness of degrading and conjugate degrading maps}\label{sec:IFFderivation}

The map composition from item~\ref{def:allmaps2} in Definition~\ref{def:allmaps} rewritten in terms of linear superoperators reads
\begin{equation}\label{eq:degradableSupOp}
  \Dcal\circ\Mcal=\wh\Mcal\leftrightarrow\msf{MD}=\wh{\msf{M}}.
\end{equation}
At first it seems that to decide degradability of $\Mcal$ it suffices to invert the RHS of Eq.~(\ref{eq:degradableSupOp}) and calculate $\msf{D}=\msf{M}^{-1}\wh{\msf{M}}$, where the generalized inverse is used if $d_A\neq d_B$ for $\Mcal:\msc{S}(A)\mapsto\msc{S}(B)$\footnote{We will assume that the inverse satisfies the uniqueness criteria~\cite{ben2003generalized} to avoid further ambiguities.}. Then, $\msf{D}$ is reshuffled to the form of the Choi matrix $R_\Dcal$ and $\Mcal$ is degradable if and only if the eigenvalues of $R_\Dcal$ are non-negative. But this is unfortunately a wrong statement. More precisely, the ``if'' direction is true -- non-negative eigenvalues provide an explicit construction of the degrading map $\Dcal$. But the converse is not correct and it is certainly not an explicit condition that determines whether a channel is
degradable. We will show  under what  circumstances the superoperator formalism can be successfully used to decide the (non-)existence of a degrading map.

The fact that $\msf{D}=\msf{M}^{-1}\wh{\msf{M}}$ does not provide the most general degrading map can be easily seen in the following case. Assume  $\Mcal:\msc{S}(A)\to\msc{S}(B)$ and $\wh{\Mcal}:\msc{S}(A)\to\msc{S}(E)$ such that $d_A<d_B<d_E$. We will call $\msf{D}$ a \emph{candidate} for a degrading map since it must be checked whether it is a CP map and the best way is to investigate the eigenvalues of the Choi matrix. Assume that some of the eigenvalues are negative. Does it exclude the existence of another degrading map? Since $\msf{M}:\bbC^{d_A^2}\mapsto\bbC^{d_B^2}$ and $\wh{\msf{M}}:\bbC^{d_A^2}\mapsto\bbC^{d_E^2}$  we deduce  from Lemma~\ref{lem:rankProps} that
\begin{equation}\label{eq:rankDeficit}
  \rank{[\msf{M}^{-1}\wh{\msf{M}}]}\leq\min{[\rank{\msf{M}},\rank{\wh{\msf{M}}}]}\leq d_A^2,
\end{equation}
where the first inequality follows from items~\ref{lem:rankProps4} and~\ref{lem:rankProps6} and the second one from~\ref{lem:rankProps3}. But this does not prohibit the existence of another degrading candidate, say $\msf{\tilde D}$, where
$$
\rank{\msf{\tilde D}}\leq{d_B^2}
$$
since $\tilde{\msf{D}}:\bbC^{d_B^2}\mapsto\bbC^{d_E^2}$. If $d_A^2<\rank{\msf{\tilde D}}$ it brings the possibility of having another CP degrading map $\tilde\Dcal$ if the  Choi matrix $R_{\tilde\Dcal}$ is positive semi-definite.

But the situation is actually worse. As will be illustrated in Example~\ref{exa:TDchannel} following the formulation of the main result of this section, if some eigenvalues of the Choi matrix  indicate that  $\msf{D}=\msf{M}^{-1}\wh{\msf{M}}$ does not correspond to a CP map, it does not prohibit an existence of a degrading CP map $\tilde\Dcal$ such that $\rank{\msf{D}}=\rank{\msf{\tilde D}}\leq{d_A^2}$.

In the core of the main result lies one of the most elementary results of applied linear algebra addressing the existence of a solution  for a set of linear equations and its uniqueness.
\begin{thm}\label{thm:RoucheCapelli}
  Let $A:\bbR^{d_1}\mapsto\bbR^{d_2}$ acting as $Ax=y$. Assume that $y\in\bbR^{d_2}$ is kept fixed and denote $A'\in\euM_{d_1,d_2+1}$ to be the augmented matrix $[A|y]$ ($y$ added as an additional column of $A$). Then there exists $x\in\bbR^{d_1}$ satisfying $Ax=y$ iff $\rank{A}=\rank{A'}$ and, moreover, $x$ is not unique whenever $d_1>\rank{A}$.
\end{thm}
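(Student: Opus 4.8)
The plan is to recognise this as the classical Rouch\'e--Capelli (Kronecker--Capelli) theorem and to extract both assertions --- solvability and (non-)uniqueness --- directly from the definition $\rank{A}=\dim{[\im{A}]}$ together with the rank--nullity identity of Lemma~\ref{lem:rankProps}, item~\ref{lem:rankProps1}. Throughout I regard $A$ as the linear map $x\mapsto Ax$, so that $\im{A}$ is precisely the span of the columns of $A$, and I read the rank of the augmented matrix $A'$ likewise as the dimension of its column span.

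\textbf{Solvability.} First I would observe that $Ax=y$ has a solution exactly when $y\in\im{A}$. Appending the single column $y$ to $A$ produces a matrix $A'$ whose column span is $\im{A}+\bbR y$, which always contains $\im{A}$. If $y\in\im{A}$, this sum collapses to $\im{A}$ and hence $\rank{A'}=\dim{[\im{A}]}=\rank{A}$. Conversely, if $\rank{A'}=\rank{A}$, then $\im{A}$ is a subspace of the column span of $A'$ with the same finite dimension, so the two spaces coincide; in particular $y$, being one of the columns of $A'$, lies in $\im{A}$ and a solution exists. This gives the ``iff''.

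\textbf{(Non-)uniqueness.} Here I work under the assumption that a solution $x_{0}$ exists, since the clause is vacuous otherwise. If $x$ is any solution then $A(x-x_{0})=0$, so $x-x_{0}\in\ker{A}$; conversely every vector of the affine set $x_{0}+\ker{A}$ solves $Ax=y$. The solution is therefore unique if and only if $\ker{A}=\{0\}$. Invoking Lemma~\ref{lem:rankProps}, item~\ref{lem:rankProps1}, with $U=\bbR^{d_{1}}$ yields $\rank{A}=d_{1}-\dim{[\ker{A}]}$, so $\ker{A}\neq\{0\}$ --- equivalently $\dim{[\ker{A}]}\geq1$ --- is exactly the condition $d_{1}>\rank{A}$. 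Hence, whenever a solution exists and $d_{1}>\rank{A}$, the solution set $x_{0}+\ker{A}$ is infinite, which is the second claim.

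\textbf{Anticipated difficulty.} I do not expect a genuine obstacle: each step is either a definition or an application of Lemma~\ref{lem:rankProps}. The two places that call for mild care are (a) the non-uniqueness assertion must be read conditionally on $\rank{A}=\rank{A'}$, as it presupposes the existence of a solution, and (b) the ``rank of the augmented matrix'' has to be understood as the dimension of its column span, consistently with the definition $\rank{A}=\dim{[\im{A}]}$, which is automatic once $A'$ is treated as a linear map in the same way as $A$.
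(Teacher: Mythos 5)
Your proof is correct and is the standard argument: the paper itself states this classical Rouch\'e--Capelli theorem without proof, and its accompanying remark justifies the non-uniqueness clause exactly as you do, via Lemma~\ref{lem:rankProps}\ref{lem:rankProps1} (rank--nullity). Your column-span argument for solvability and your observation that the non-uniqueness claim must be read conditionally on a solution existing are both right and consistent with the paper's treatment.
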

Two remarks are in order.
\begin{rem}\label{rem:complexComment}
  To generalize the theorem to $A:\bbC^{d_1}\mapsto\bbC^{d_2}$ for $A\in\euM_{d_1,d_2}(\bbC)$ we realize that rank, being the dimension of a subspace, is a field dependent notion. Hence if $\rank{A}=a$  then $\rank{A^\bbR}=2a$ for $A^\bbR\in\euM_{2d_1,2d_2}(\bbR)$ by using $\bbC\simeq\bbR^{\oplus2}$ (the superscript $\bbR$ denotes  ``realification''). So, given $A\in\euM_{d_1,d_2}(\bbC)$ and a fixed $y\in\bbC^{d_2}$, we compare $\rank{A^\bbR}$ and $\rank{[A^\bbR|y^\bbR]}$. If they agree then according to the above theorem there exists $x^\bbR$ satisfying $A^\bbR x^\bbR=y^\bbR$. But that also means that $Ax=y$ is satisfied, where we can explicitly assemble $x$ from $x^\bbR$ once we find it. Note that Theorem~\ref{thm:RoucheCapelli} does not help in any way to find the actual solution. So we can skip the whole procedure and to decide the existence of $x\in\bbC^{d_1}$ we simply check $\rank{A}=\rank{A'}$ for $A\in\euM_{d_1,d_2}(\bbC)$ and $y\in\bbC^{d_2}$.

  In the rest of the paper we assume that $\rank{A}=\rank{A'}$. It is unclear, but unlikely, whether in the context we will employ Theorem~\ref{thm:RoucheCapelli} the rank equality can actually be violated. When it is violated, the system is overdetermined, it is called \emph{inconsistent} and has no solution.
\end{rem}
\begin{rem}
  The second statement is simply a reformulation of~\ref{lem:rankProps1} in Lemma~\ref{lem:rankProps} for $L=A$ and so $d_1-\rank{A}$ is the kernel dimension. With the growing null space the vector $x$ is becoming ``more'' non-unique.
\end{rem}

By applying Eq.~(\ref{eq:colOp}) and considering $\msf{MD}\equiv\msf{MD\id}=\wh{\msf{M}}$, where $\id:\bbC^{d_E^2}\mapsto\bbC^{d_E^2}$, we transform Eq.~(\ref{eq:degradableSupOp}) into
\begin{equation}
  \col{\msf{[MD\msf{\id}]}}=\col{\wh{\msf{M}}}.
\end{equation}
By virtue of Lemma~\ref{lem:columnization} we identify $\msf{M}=C^\top,\msf{D}=B^\top$ and $\id=A^\top$ and find
\begin{equation}\label{eq:columnizationInAction}
  \col{\msf{[MD\msf{\id}]}}=(\msf{M}\otimes\msf{\id})\col{\msf{D}}=\col{\wh{\msf{M}}}.
\end{equation}
Note that the left action of $\msf{M}\otimes\msf{\id}$ on $\col{\msf{D}}$ is not in contradiction with the right action of $\msf{M}$ on $\row{\vr}$ we adopted in~(\ref{eq:SuperOpAction}) -- they obviously act in different contexts.

We are led to the main result.
\begin{thm}\label{thm:IFFcondition}
  Let $\Mcal:\msc{S}(A)\to\msc{S}(B)$ be a quantum channel and $\wh{\Mcal}:\msc{S}(A)\to\msc{S}(E)$ its complementary channel and let the corresponding superoperator $\msf{M}$ of $\Mcal$ be full rank: $\rank{\msf{M}}=\min{[d_A^2,d_B^2]}$. Then, if a degrading map $\Dcal:\msc{S}(B)\to\msc{S}(E)$ exists, it is unique iff $d_B\leq d_A$.
\end{thm}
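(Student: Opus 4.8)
The plan is to turn the defining equation $\Dcal\circ\Mcal=\wh\Mcal$ of a degrading map into one large system of linear equations and then read off uniqueness from the Rouch\'e--Capelli theorem. By item~\ref{def:allmaps2} of Definition~\ref{def:allmaps} this equation is $\msf{M}\msf{D}=\wh{\msf{M}}$ in the superoperator picture, and applying the columnization identity of Lemma~\ref{lem:columnization} exactly as in Eq.~(\ref{eq:columnizationInAction}) recasts it as
\begin{equation*}
  (\msf{M}\otimes\msf{\id})\,\col{\msf{D}}=\col{\wh{\msf{M}}},\qquad\msf{\id}:\bbC^{d_E^2}\to\bbC^{d_E^2},
\end{equation*}
a linear system with $d_A^2 d_E^2$ equations in the unknown vector $\col{\msf{D}}\in\bbC^{d_B^2 d_E^2}$. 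Since the assignments $\Dcal\mapsto\msf{D}\mapsto\col{\msf{D}}$ form a bijection between degrading-map candidates and $\bbC^{d_B^2 d_E^2}$, the degrading map is unique precisely when the solution vector of this system is unique.

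Next I would invoke the complex version of Theorem~\ref{thm:RoucheCapelli}, as licensed in Remark~\ref{rem:complexComment}. Assuming, as throughout the paper, that the system is consistent --- which here is guaranteed by the hypothesis that a degrading map exists --- the solution is unique if and only if the number of unknowns equals the rank of the coefficient matrix, i.e.\ $d_B^2 d_E^2=\rank{[\msf{M}\otimes\msf{\id}]}$. By item~\ref{lem:rankProps5} of Lemma~\ref{lem:rankProps} one has $\rank{[\msf{M}\otimes\msf{\id}]}=\rank{\msf{M}}\,\rank{\msf{\id}}=\rank{\msf{M}}\,d_E^2$, and the full-rank hypothesis gives $\rank{\msf{M}}=\min{[d_A^2,d_B^2]}$. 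Cancelling the common factor $d_E^2$, uniqueness is equivalent to $d_B^2=\min{[d_A^2,d_B^2]}$, that is, to $d_B^2\leq d_A^2$, which for positive integers is the same as $d_B\leq d_A$.

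Finally I would make the two implications explicit. If $d_B\leq d_A$, then $\col{\msf{D}}$, hence $\Dcal$, is literally the unique solution of the system, so whenever this solution is a channel it is the only degrading map. If $d_B>d_A$, then $\msf{M}\otimes\msf{\id}$ has a right kernel of dimension $(d_B^2-d_A^2)d_E^2>0$, so, starting from any one degrading channel $\Dcal_0$, every vector of the corresponding $(d_B^2-d_A^2)d_E^2$-dimensional affine solution space again solves $\msf{M}\msf{D}=\wh{\msf{M}}$; one then exhibits a second genuine channel among them --- concretely by displacing $\Dcal_0$ along a trace-annihilating, Hermiticity-preserving kernel vector while remaining inside the CPTP set, as Example~\ref{exa:TDchannel} does explicitly. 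I expect the only step that is not pure rank arithmetic to be this last one, namely verifying that the positive-dimensional solution family really meets the set of channels in more than one point; the rank identity itself is immediate from Lemmas~\ref{lem:columnization} and~\ref{lem:rankProps} together with Theorem~\ref{thm:RoucheCapelli}, the sole bookkeeping subtlety (real versus complex dimension counts) being already dispatched by Remark~\ref{rem:complexComment}.
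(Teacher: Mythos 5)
Your proposal follows essentially the same route as the paper: columnize $\msf{M}\msf{D}=\wh{\msf{M}}$ into $(\msf{M}\otimes\msf{\id})\col{\msf{D}}=\col{\wh{\msf{M}}}$, apply Theorem~\ref{thm:RoucheCapelli} with $\rank{[\msf{M}\otimes\msf{\id}]}=\min{[d_A^2,d_B^2]}\,d_E^2$ from Lemma~\ref{lem:rankProps}, and compare against the $d_B^2d_E^2$ unknowns to get $d_B\leq d_A$. The one point where you go beyond the paper --- checking that for $d_B>d_A$ the affine solution family actually meets the CPTP set in a second point rather than merely being a positive-dimensional family of \emph{candidates} --- is a legitimate subtlety that the paper itself does not address (it reads ``unique'' at the level of the linear system), so your proof is, if anything, slightly more careful on the ``only if'' direction.
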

\begin{rem}
  Restating the obvious, when $d_B\leq d_A$ and the candidate for a degrading map is negative definite (that is, the Choi matrix $R_\Dcal$ obtained by reshuffling  $\msf{D}=\msf{M}^{-1}\wh{\msf{M}}$ has all eigenvalues negative) then no degrading map exists. Conversely, if $d_B>d_A$ then the negative eigenvalues of the Choi matrix are inconclusive for the non-existence of a degrading map.
\end{rem}
\begin{proof}
  We are looking for the condition when $\col{\msf{D}}$ exists and is unique.  According to Theorem~\ref{thm:RoucheCapelli}, the solution exists if
  $$
  \rank{[\msf{M}\otimes\msf{\id}]}=\rank{[\msf{M}\otimes\msf{\id}|\col{\wh{\msf{M}}}]}
  $$
  and so let's assume that. As mentioned in the second paragraph of Remark~\ref{rem:complexComment}, it is not clear whether there are CP maps (meaning physically plausible situations) where this condition can be violated. Following the full rank assumption of $\msf{M}$, we  find from Lemma~\ref{lem:rankProps} (item~\ref{lem:rankProps5}) and the properties of the identity map $\id$ that
    \begin{equation}\label{eq:rankOfTensorProduct}
      \rank{[\msf{M}\otimes\msf{\id}]}=\min{[d_A^2,d_B^2]}\times d_E^2.
    \end{equation}
  Hence $\col{\msf{D}}$ is unique if and only if the RHS of Eq.~(\ref{eq:rankOfTensorProduct}) equals $d_1\equiv\dim{[\col{\msf{D}}]}=d_B^2d_E^2$. But this is equivalent to $d_B\leq d_A$.
\end{proof}
\begin{rem}[Important]  If $\msf{M}$ is rank-deficient ($\rank{\msf{M}}=\d<\min{[d_A^2,d_B^2]}$), then the non-uniqueness of a degrading map appears even for $d_B\leq d_A$. This is because the condition $d_1>\rank{A}$ from Theorem~\ref{thm:RoucheCapelli} is always true: $d_B^2d_E^2>\d d_E^2$. This further adds to the relevance of the issue discussed in this paper -- a plenty of requirements must be met in order to claim that a CP degrading map is unique or it does not exists at all.
\end{rem}
\begin{rem}
  If the result is known to the authors of~\cite{cubitt2008structure} it is stated unfortunately somewhat informally (see Sec.~II.A). In other places the role of the Choi rank (the minimal environment dimension $d_E$~\cite{zyczkowski2004duality}) is often compared to $d_A$ in connection with the uniqueness of the degrading map adding to the impression that it is somehow relevant. Here, the statement of Theorem~\ref{thm:IFFcondition} is unambiguous; the Choi rank of the quantum channel $\Mcal:\msc{S}(A)\to\msc{S}(B)$ has nothing to do with this particular question and there are (and are not) unique degrading candidates for any relation between $d_A$ and $d_E$.

  Of course, by reversing the role of the channel and its complement like will be done in Example~\ref{exa:TDchannel}, the Choi rank becomes relevant again by applying Theorem~\ref{thm:IFFcondition}. This perfectly agrees with another observation in~\cite{cubitt2008structure} (Appendix B.5), where except for two singular cases the antidegradable map for the qubit depolarizing channel is not unique. Its Choi rank is greater than two implying $d_A<d_E$ and according to our result the conclusion follows.
\end{rem}
\begin{lem}\label{lem:ConjDegrSuperOp}
  Let $\msf{D}$ be a candidate for a degrading map $\Dcal$ and the superoperator $\overline{\msf{D}}$ a candidate for a conjugate degrading map $\overline{\Dcal}=\C\circ\Dcal$. Then $\rank{\msf{D}}=\rank{\overline{\msf{D}}}$.
\end{lem}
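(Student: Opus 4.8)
The plan is to reduce the claim to the elementary fact that right-multiplication by an invertible (in fact, permutation) matrix preserves rank. First I would invoke the composition rule for superoperators recorded just after Eq.~(\ref{eq:SuperOpAction}): since $\overline{\Dcal}=\C\circ\Dcal$, and the composition $\Ncal\circ\Mcal$ corresponds to the product $\msf{M}\msf{N}$ (acting from the right), the candidate superoperator of $\overline{\Dcal}$ factorizes as $\overline{\msf{D}}=\msf{D}\,\msf{C}$, where $\msf{C}\in\euM_{d_E^2}(\bbC)$ is the superoperator of the conjugation map $\C$ acting on the output register $E$ of $\Dcal$. One caveat deserves a line in the proof: on general, non-Hermitian arguments $\C\circ\Dcal$ is antilinear, so strictly speaking $\overline{\msf{D}}$ is the superoperator of the genuinely \emph{linear} map obtained by composing $\Dcal$ with the transposition of the $E$-register, which agrees with $\C\circ\Dcal$ on the Hermitian matrices --- the only inputs that matter, since $\Dcal(\vr)$ is Hermitian for every physical $\vr$.

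The second step is to pin down $\msf{C}$. Acting on Hermitian matrices, $\C$ is transposition $X\mapsto X^{\top}$ on $\euM_{d_E}(\bbC)$; reading off the action on the flattened matrix, $\row{[X^{\top}]}=\row{[X]}\,\msf{C}$, exhibits $\msf{C}$ as precisely the permutation of the $d_E^{2}$ coordinates induced by transposing the underlying $d_E\times d_E$ array. Thus $\msf{C}$ is a permutation matrix --- in particular invertible, with $\msf{C}^{2}=\msf{\id}$ since $\C$ is an involution. It is worth stressing that this is a \emph{full} transpose of the $E$-register, i.e.\ a global reshuffling of all $d_E^{2}$ superoperator coordinates; by contrast, the same operation performed on the Choi matrix, $R_{\overline{\Dcal}}=(\id\otimes\C)R_{\Dcal}$, is only a \emph{partial} transpose, which does change rank (Lemma~\ref{lem:rankProps}, item~\ref{lem:rankProps8}). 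This is exactly the subtlety flagged when $\C$ was introduced, and it is what makes the superoperator picture the right one here.

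Having established $\overline{\msf{D}}=\msf{D}\,\msf{C}$ with $\msf{C}$ invertible, I would finish with Lemma~\ref{lem:rankProps}: item~\ref{lem:rankProps4} gives $\rank{\overline{\msf{D}}}=\rank{[\msf{D}\,\msf{C}]}\le\rank{\msf{D}}$, while the same item applied to $\msf{D}=\overline{\msf{D}}\,\msf{C}$ (using $\msf{C}^{2}=\msf{\id}$, or alternatively item~\ref{lem:rankProps6}) gives $\rank{\msf{D}}\le\rank{\overline{\msf{D}}}$; hence $\rank{\msf{D}}=\rank{\overline{\msf{D}}}$.

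The only place the argument can go astray is the bookkeeping in the second step --- correctly recognizing that composing $\C$ on the $E$-output becomes a full transpose (a rank-preserving coordinate permutation) of the superoperator, rather than the rank-changing partial transpose one meets at the Choi-matrix level. Everything else is the superoperator composition rule and a one-line invertibility argument, so I expect no serious technical difficulty beyond being careful with conventions.
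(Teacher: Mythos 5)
Your proposal is correct and follows essentially the same route as the paper: both identify the superoperator of $\C$ as a permutation (swap) matrix acting by right-multiplication, so that $\overline{\msf{D}}=\msf{D}\msf{C}$ has the same rank as $\msf{D}$. The only cosmetic difference is that you read off $\msf{C}$ directly from the action on flattened matrices, while the paper obtains it as the reshuffled Choi matrix $R_\C$ (the swap operator), which yields the same permutation matrix.
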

\begin{proof}
  Density matrix complex conjugation $\C:\mathscr{S}(A')\to\mathscr{S}(A'')$, where $A\simeq A'\simeq A''$, is a specific permutation of the density matrix components. So from Eq.~(\ref{eq:ChoiConverse}) we get
        \begin{align}
          R_\C &= (\id_A\otimes\C)\circ\Phi_{AA'}  =  (\id_A\otimes\C)\circ \sum_{i,j=1}^{d^2_{A}}\ket{ii}_{AA'}\bra{jj}_{AA'}
          =  \sum_{i,j=1}^{d^2_{A}}\kbr{i}{j}_{A}\otimes\kbr{j}{i}_{A''}.
        \end{align}
  But the last equality is just a swap operator $S:\ket{ij}\mapsto\ket{ji}$ and therefore a unitary, and in particular, permutation matrix.  From Eq.~(\ref{eq:ChoiVSSuperop}) we then immediately see that the corresponding (positive but not completely positive) superoperator $\msf{C}$ is the permutation matrix itself
  $$
  \msf{C}=R_\C.
  $$
  This implies  $\overline{\msf{D}}=\msf{D}\msf{C}$ and therefore $\rank{\overline{\msf{D}}}=\rank{\msf{D}}$ since the unitary $\msf{C}$ merely permutes the columns of $\msf{D}$.
\end{proof}
\begin{cor}\label{cor:IFFcondition}
  Let $\Mcal:\msc{S}(A)\to\msc{S}(B)$ be a quantum channel and $\wh{\Mcal}:\msc{S}(A)\to\msc{S}(E)$ its complementary channel. Then, a conjugate  degrading map $\overline{\Dcal}:\msc{S}(B)\to\msc{S}(E')$, satisfying $\overline{\Dcal}\circ\Mcal=\C\circ\wh\Mcal$, exists and it is unique iff $d_B\leq d_A$.
\end{cor}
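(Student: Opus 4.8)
The plan is to deduce the Corollary from Theorem~\ref{thm:IFFcondition} by showing that a conjugate degrading candidate solves the \emph{same} linear problem as an ordinary degrading candidate, up to right-multiplication by an invertible matrix, so that existence and uniqueness are unaffected. First I would rewrite item~(ii) of Definition~\ref{def:allmaps} in superoperator form as in Eq.~(\ref{eq:degradableSupOp}): with the right-action convention for composition, $\overline{\Dcal}\circ\Mcal=\C\circ\wh\Mcal$ is equivalent to $\msf{M}\,\overline{\msf{D}}=\wh{\msf{M}}\,\msf{C}$, where $\msf{C}\in\euM_{d_E^2}(\bbC)$ is the superoperator of the conjugation map $\C$ acting this time on $\msc{S}(E)$. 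By the computation in the proof of Lemma~\ref{lem:ConjDegrSuperOp}, $\msf{C}$ equals the Choi matrix of $\C$, which is a swap operator and hence a permutation matrix; in particular $\msf{C}$ is invertible and $\rank{[\wh{\msf{M}}\msf{C}]}=\rank{\wh{\msf{M}}}$.

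From here there are two equivalent ways to finish, both in the spirit of Sec.~\ref{sec:IFFderivation}. One is to observe that $\msf{D}\mapsto\msf{D}\msf{C}$ is a bijection of $\euM_{d_B^2,d_E^2}(\bbC)$ onto itself that carries the solution set of $\msf{M}\msf{D}=\wh{\msf{M}}$ onto that of $\msf{M}\,\overline{\msf{D}}=\wh{\msf{M}}\msf{C}$, so the two solution sets have the same cardinality and Theorem~\ref{thm:IFFcondition} applies verbatim (using $E'\simeq E$, hence $d_{E'}=d_E$, while the condition $d_B\leq d_A$ involves neither). The other, parallel to the proof of Theorem~\ref{thm:IFFcondition}, is to columnize: inserting $\id:\bbC^{d_{E'}^2}\mapsto\bbC^{d_{E'}^2}$ and applying Lemma~\ref{lem:columnization} to $\msf{M}\,\overline{\msf{D}}\,\id=\wh{\msf{M}}\msf{C}$ gives $(\msf{M}\otimes\id)\,\col{\overline{\msf{D}}}=\col{[\wh{\msf{M}}\msf{C}]}$, and Theorem~\ref{thm:RoucheCapelli} (with the consistency hypothesis of Remark~\ref{rem:complexComment}, unchanged because $\msf{C}$ only permutes columns) yields uniqueness iff $\dim{[\col{\overline{\msf{D}}}]}=\rank{[\msf{M}\otimes\id]}$. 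Under the full-rank hypothesis on $\msf{M}$ inherited from Theorem~\ref{thm:IFFcondition}, item~\ref{lem:rankProps5} of Lemma~\ref{lem:rankProps} gives $\rank{[\msf{M}\otimes\id]}=\min{[d_A^2,d_B^2]}\,d_{E'}^2$, while $\dim{[\col{\overline{\msf{D}}}]}=d_B^2\,d_{E'}^2$; so uniqueness is equivalent to $d_B^2=\min{[d_A^2,d_B^2]}$, i.e.\ to $d_B\leq d_A$, which is the claim.

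Once the superoperator dictionary is set up the computation is routine; the one point that really needs Lemma~\ref{lem:ConjDegrSuperOp} is the identification of $\msf{C}$ with a genuine permutation matrix, since that is what lets it commute past $\col{}$ cleanly and leaves every rank and dimension in the Rouch\'e--Capelli bookkeeping intact. I would also state explicitly, as for Theorem~\ref{thm:IFFcondition}, that the full-rank assumption on $\msf{M}$ is needed: if $\msf{M}$ is rank-deficient the same count (now $\rank{[\msf{M}\otimes\id]}=\d\,d_{E'}^2<d_B^2 d_{E'}^2$) shows the conjugate degrading candidate is non-unique even when $d_B\leq d_A$.
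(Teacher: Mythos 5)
Your proposal is correct and follows essentially the same route as the paper, which simply notes that by Lemma~\ref{lem:ConjDegrSuperOp} the proof is identical to that of Theorem~\ref{thm:IFFcondition} under the same assumptions: since $\msf{C}$ is an invertible permutation matrix, right-multiplication by it is a bijection of solution sets and leaves every rank in the Rouch\'e--Capelli count unchanged. Your explicit remarks about inheriting the full-rank hypothesis on $\msf{M}$ and about $d_{E'}=d_E$ playing no role in the condition $d_B\leq d_A$ are exactly the ``same assumptions'' the paper invokes implicitly.
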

As a consequence of Lemma~\ref{lem:ConjDegrSuperOp}, the proof of the corollary is identical to that of  Theorem~\ref{thm:IFFcondition} considering the same  assumptions.

Let's take a look at a case where $\msf{D}=\msf{M}^{-1}\wh{\msf{M}}$ fails to produce a CP degrading map but  Theorem~\ref{thm:IFFcondition} admits many solutions. We will indeed find one that is CP. To this end, let's introduce two important quantum channels: qudit transpose depolarizing (TD)~\cite{fannes2004additivity,datta2006complementarity}   and  qudit depolarizing channel
\begin{subequations}
\begin{align}
      \Tcal(\vr)&=t\vr^\top+(1-t){1\over d}\id,\label{eq:transDepChan}\\
      \Pcal(\vr)&=s\vr+(1-s){1\over d}\id,\label{eq:DepChan}
\end{align}
\end{subequations}
where $t,s$ are real parameters and $\id\in\euM_{d}$ is an identity matrix. The complete positivity of $\Tcal$ dictates $-{1\over d-1}\leq t\leq{1\over d+1}$ and that of $\Pcal$ imposes $-{1\over d^2-1}\leq s\leq1$. An important feature of both classes is that they are covariant with respect to the unitary group $SU(d)$. Denote $\msc{G}:SU(d)\to GL(\bbC^d)$ the fundamental representation of $SU(d)$ and $\msc{G}^*:SU^*(d)\to GL(\bbC^d)$ the inequivalent (for $d>2$) fundamental representation. Further denote  $\msc{K}:SU(d)\otimes{SU(d)}\to GL(\bbC^{d^2})$ and  $\msc{L}:SU(d)\otimes{SU^*(d)}\to GL(\bbC^{d^2})$ to be the corresponding tensor product of two fundamental representations. Then, it is known~\cite{datta2006complementarity} that the  qudit TD channel and its complement transform covariantly: $\Tcal\circ\msc{G}  = \msc{G}^*\circ\Tcal$ and $\wh\Tcal\circ\msc{K}  =  \msc{K}\circ\wh\Tcal$. Similarly, $\Pcal\circ\msc{G}  = \msc{G}^*\circ\Pcal$ and $\wh\Pcal\circ\msc{L}  =  \msc{L}\circ\wh\Pcal$ hold. Hence the channels' outputs transform irreducibly but not their complements. Recall that $SU(d)\otimes{SU(d)}$ splits into a direct sum of a completely symmetric and antisymmetric representation and $SU(d)\otimes{SU^*(d)}$ acts irreducibly on $\bbC\oplus\bbC^{d^2-1}$. Their action coincide only for $d=2$ due to the aforementioned lack of an inequivalent fundamental rep in this case. Indeed, by setting $s\mapsto-t$ we find
\begin{equation}\label{eq:DepolSsTD}
\Tcal(\vr)=e^{i\pi/2\s_X}\Pcal(\vr)\big|_{s\mapsto-t}e^{-i\pi/2\s_X}.
\end{equation}

\begin{exa}\label{exa:TDchannel}
    We first find the complementary channel $\wh\Tcal$. It turns out that it is closely related to the following linear \emph{positive} map:
     \begin{equation}\label{eq:mixedsymmetryOp}
         \T:\vr\mapsto\tilde\s={1\over d}\big((\a+\b)\Pi_++(\a-\b)\Pi_-)(\id\otimes\vr)((\a+\b)\Pi_++(\a-\b)\Pi_-\big)^\dg,
     \end{equation}
     where $\Pi_\pm$  is a projector onto $\mathrm{Sym/Alt}[\bbC^d\otimes\bbC^d]$ and $\a,\b\in\bbR$. This map was studied for different purposes in~\cite{vollbrecht2001entanglement}. By restricting to $d=2$ and identifying $\a+\b=\sqrt{1+t\over2}$ and $\a-\b=\sqrt{1-3t\over2}$ we find $t=2\a\b$. The qubit TD channel (and hence its complement) is CP for $-1\leq t\leq1/3$ and only for these values of $t$ the hyperbolae $t=2\a\b$ intersects the normalization ellipse $\a^2+\a\b+\b^2=1$ obtained from~(\ref{eq:mixedsymmetryOp}) by $\Tr{}[\tilde\s]=1$. Hence the normalization condition  automatically ensures the complete positivity of the trace-preserving map $\T$ and so we may write
     \begin{equation}\label{eq:complTDoutput}
    \T\equiv\wh\Tcal(\vr)=
    \left[
        \begin{smallmatrix}
         \frac{1}{2}(1+t)\vr_{11}  & \frac{1+t}{2 \sqrt{2}}\vr_{10}  & 0 & \frac{\sqrt{1-3t} \sqrt{1+t}}{2 \sqrt{2}}\vr_{10}  \\
         \frac{1+t}{2 \sqrt{2}}\vr_{01}  & \frac{1}{4}(1+t) & \frac{1+t}{2 \sqrt{2}}\vr_{10}  & \frac{1}{4}\sqrt{1-3t} \sqrt{1+t}(\vr_{00}-\vr_{11}) \\
         0 & \frac{1+t}{2 \sqrt{2}}\vr_{01}  & \frac{1}{2}(1+t)\vr_{00}  & -\frac{\sqrt{1-3t} \sqrt{1+t}}{2 \sqrt{2}}\vr_{01}  \\
         \frac{ \sqrt{1-3t} \sqrt{1+t}}{2 \sqrt{2}}\vr_{01} & \frac{1}{4}\sqrt{1-3t} \sqrt{1+t}(\vr_{00}-\vr_{11}) & -\frac{\sqrt{1-3t}\sqrt{t+1}}{2\sqrt{2}}\vr_{10} & \frac{1}{4}(1-3t)
        \end{smallmatrix}
    \right].
    \end{equation}
     The channel still deserves to be called a qubit channel in spite of $d_E>2$.

     The question whether $\Tcal$ is degradable can be decided by calculating $\msf{D}=\msf{T^{-1}}\wh{\msf{T}}$ since $d_A=d_B=2$ and Theorem~\ref{thm:IFFcondition} informs us that the eigenvalues of the corresponding Choi matrix provide an unambiguous answer (some are negative except for $t=-1$). But to study antidegradability of $\Tcal$, the question becomes more complicated since the Choi rank of $\Tcal$ equals four.      The candidate for an antidegrading channel $\Acal$ obtained from $\msf{A}=\wh{\msf{T}}^{-1}\msf{T}$   reads
     \begin{align}\label{eq:antidegMapforTD}
      \msf{A}  &=
      \SmallMatrix{
         \frac{3 t^3+t^2+t-1}{2 (t-1) \left(3 t^2+1\right)} & 0 & 0 & \frac{3 t^3-t^2+t+1}{-6 t^3+6 t^2-2 t+2} \\
         0 & \frac{t}{\sqrt{2} (1-t)} & 0 & 0 \\
         0 & 0 & 0 & 0 \\
         0 & \frac{t \sqrt{-3 t^2-2 t+1}}{\sqrt{2} \left(1-t^2\right)} & 0 & 0 \\
         0 & 0 & \frac{t}{\sqrt{2} (1-t)} & 0 \\
         \frac{t+1}{6 t^2+2} & 0 & 0 & \frac{t+1}{6 t^2+2} \\
         0 & \frac{t}{\sqrt{2} (1-t)} & 0 & 0 \\
         \frac{t \sqrt{-3 t^2-2 t+1}}{2-2 t^2} & 0 & 0 & \frac{t \sqrt{-3 t^2-2 t+1}}{2 \left(t^2-1\right)} \\
         0 & 0 & 0 & 0 \\
         0 & 0 & \frac{t}{\sqrt{2} (1-t)} & 0 \\
         \frac{3 t^3-t^2+t+1}{-6 t^3+6 t^2-2 t+2} & 0 & 0 & \frac{3 t^3+t^2+t-1}{2 (t-1) \left(3 t^2+1\right)} \\
         0 & 0 & \frac{t \sqrt{-3 t^2-2 t+1}}{\sqrt{2} \left(t^2-1\right)} & 0 \\
         0 & 0 & \frac{t \sqrt{-3 t^2-2 t+1}}{\sqrt{2} \left(1-t^2\right)} & 0 \\
         \frac{t \sqrt{-3 t^2-2 t+1}}{2-2 t^2} & 0 & 0 & \frac{t \sqrt{-3 t^2-2 t+1}}{2 \left(t^2-1\right)} \\
         0 & \frac{t \sqrt{-3 t^2-2 t+1}}{\sqrt{2} \left(t^2-1\right)} & 0 & 0 \\
         \frac{1-3 t}{6 t^2+2} & 0 & 0 & \frac{1-3 t}{6 t^2+2}}.
     \end{align}
     The three distinct eigenvalues of the corresponding Choi matrix $R_\Acal$ are
     \begin{subequations}\label{eq:antidegrChoiMapEigs}
      \begin{align}
        \lambda^\msf{A}_1=&\frac{-3 t^3+ t^2-t-1}{2 (t-1)(3 t^2+1)},\\
        \lambda^\msf{A}_{2,3}=&\frac{3 t^4+2 t^3+2 t^2+2 t-1\pm2t\sqrt{-18 t^6-6 t^5+t^4-8 t^3-2 t+1}}{2 (t-1) (t+1) \left(3t^2+1\right)}
      \end{align}
     \end{subequations}
     and, for example, for $t=-2/3$ two of them are negative (see Fig.~\ref{fig:qubit} where the eigenvalues are plotted). But it would be incorrect to conclude that $\Tcal$ is not antidegradable. The following superoperator
     \begin{align}\label{eq:Atilde}
      \msf{\tilde A}  &=
        \begin{bmatrix}
         1 & 0 & 0 & 0 \\
         0 & -\frac{1}{\sqrt{2}} & 0 & 0 \\
         0 & 0 & 0 & 0 \\
         0 & -\frac{1}{\sqrt{2}} & 0 & 0 \\
         0 & 0 & -\frac{1}{\sqrt{2}} & 0 \\
         \frac{1}{2} & 0 & 0 & \frac{1}{2} \\
         0 & -\frac{1}{\sqrt{2}} & 0 & 0 \\
         -\frac{1}{2} & 0 & 0 & \frac{1}{2} \\
         0 & 0 & 0 & 0 \\
         0 & 0 & -\frac{1}{\sqrt{2}} & 0 \\
         0 & 0 & 0 & 1 \\
         0 & 0 & \frac{1}{\sqrt{2}} & 0 \\
         0 & 0 & -\frac{1}{\sqrt{2}} & 0 \\
         -\frac{1}{2} & 0 & 0 & \frac{1}{2} \\
         0 & \frac{1}{\sqrt{2}} & 0 & 0 \\
         \frac{1}{2} & 0 & 0 & \frac{1}{2}
        \end{bmatrix}
     \end{align}
     represents a CP map ($\lambda^{\tilde{\msf{A}}}=2$ for $R_{\tilde{\Acal}}$) and satisfies $\wh{\msf{T}}\msf{\tilde A}=\msf{T}$. So it is a legitimate antidegrading map. Also note that $\rank{\msf{A}}=\rank{\msf{\tilde A}}=4$.
\end{exa}
\begin{rem}
    What about the rest of the ``gaps'' of $t$ in Fig.~\ref{fig:qubit}, where the eigenvalues~(\ref{eq:antidegrChoiMapEigs}) of $R_\Acal$ corresponding to $\msf{A}=\wh{\msf{T}}^{-1}\msf{T}$ are negative? Numerical search  yields a CP antidegrading map for all picked values from these intervals. In this case, however, we have just rediscovered the rediscovered. Due to equivalence~(\ref{eq:DepolSsTD}),  $\wh{\Tcal}$ is indeed known~\cite{cerf2000pauli} to be degradable ($\Tcal$ antidegradable) for $t\in[-2/3,1/3]$ as also confirmed in~\cite{cubitt2008structure}.
\end{rem}
\begin{figure}[t]
   \resizebox{8cm}{!}{\includegraphics{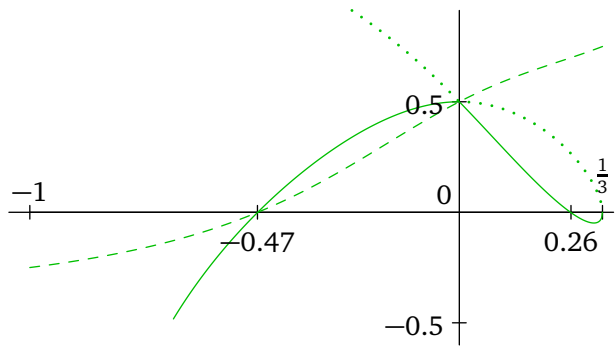}}
    \caption{Three distinctive eigenvalues Eq.~(\ref{eq:antidegrChoiMapEigs}) of the candidate for the antidegrading map $\msf{A}$ in Eq.~(\ref{eq:antidegMapforTD}) of the qubit transpose depolarizing channel are depicted as a function of the depolarizing parameter $t\in[-1,1/3]$.}
    \label{fig:qubit}
\end{figure}
Is there a systematic way of finding out whether a degrading map exists (or no) even if Theorem~\ref{thm:IFFcondition} admits ambiguities? To the author's knowledge, no such procedure is known. The situation is slightly more favorable for $SU(d)$ covariant channels. In this case, the covariance constraint imposed on a candidate for a degrading map leads to the explicit construction of a positive semi-definite matrix that can be tested whether it is a Choi matrix corresponding to a degrading map, see~\cite{bradler2011infinite}. But this is a rather special case.

In principle, the way of exploring all possible solutions is known. We have shown that the ambiguity for degrading and conjugate degrading map comes from the kernel of $\msf{M}\otimes\id$, where $\dim{[\ker{[\msf{M}\otimes\id]}]}=d_E^2(d_B^2-d_A^2)=k$ whenever $d_B>d_A$ and zero otherwise. The basis spanning $\ker{[\msf{M}\otimes\id]}$ can be obtained, for example, from the singular value decomposition of $\msf{M}$ and let's denote the kernel basis  $\{\boldsymbol{b}_i\}_{i=1}^k$. Then  from Eq.~(\ref{eq:columnizationInAction}) we get
\begin{equation}\label{eq:kernelExpandedD}
  (\msf{M}\otimes\msf{\id})\col{\msf{D}}=(\msf{M}\otimes\msf{\id})\big(\col{\msf{D}}+\sum_{i=1}^k\a_i\boldsymbol{b}_i\big)=\col{\wh{\msf{M}}},
\end{equation}
where $\a_i\in\bbC$. This, in turn, leads to the whole family of candidates for a degrading map $\msf{\tilde{D}}$ obtainable by reversing the $\col{}$ operation from Definition~\ref{def:rowcol}. If $R_{\tilde{\Dcal}}$, as a  realigned linear map $\msf{\tilde{D}}$ (reshuffled in the opposite direction of Eq.~(\ref{eq:ChoiVSSuperop})), satisfies the proper requirements to be a valid Choi matrix (positive semi-definite) we declare it to be a valid degrading map. But we can always have quantum channels corresponding to $\msf{M}$ where $\dim{[\ker{[\msf{M}\otimes\id]}]}$ can be arbitrarily large and there is no guarantee that the eigenvalues of the corresponding $\msf{\tilde D}$ can always be calculated for unknown $\a_i$ -- usually on the contrary. A generic numerical search  could only be useful if an actual map is found (see Sec.~\ref{sec:clonerCap}). Otherwise it suffers from the same unambiguity as expressed in Theorem~\ref{thm:IFFcondition}. The obvious exception is if the problem could be reformulated as a semi-definite program. Indeed, the optimizing set is a convex cone but what is missing at the moment is the proper objective function to optimize (if it exists at all).

Before we discuss what our findings imply for the existence of conjugate-degradable quantum channels, let's point out to an intriguing property of certain degrading maps. It could have consequences for the above sketched algorithm to explore all possible candidates for degrading maps.
\begin{exa}\label{exa:TDchannelAgain}
  Let's revisit Example~\ref{exa:TDchannel}. A closer look at the Choi matrix uncovers that $\Tr{}R_\Acal=4/(1+3t^2)$ which seems odd at first (naturally, we consider $t$'s where $\msf{A}$ is positive semi-definite, see Fig.~\ref{fig:qubit}). Recall that in our definition of the Choi matrix (Eq.~(\ref{eq:ChoiConverse})), given $\Acal:\msc{S}(E)\mapsto\msc{S}(B)$ we should have $\Tr{}R_\Acal=\Tr{}\Phi_{EE'}=d_E\equiv4$ and it should not depend on $t$. A different value suggests that $\Acal$ is trace-decreasing and this is also indicated by
    \begin{equation}\label{eq:hiddenTP}
      \Tr{B}R_\Acal={1\over1+3t^2}\diag{[\{1+t,1+t,1+t,1-3t\}]}
    \end{equation}
  (cf. Eq.~(\ref{eq:JamiChoi})). But interestingly, $\Acal$ \emph{is} trace-preserving. To clarify this issue, recall the covariance properties discussed before Example~\ref{exa:TDchannel}. For $d=d_A=2$ we have $\wh\Tcal:\msc{S}(A)\mapsto\msc{S}(E)$ where $\dim{E}=4$. The $SU(2)$ covariance implies that $\im{\wh\Tcal}\varsubsetneq\msc{S}(E)$. Similarly, $\im{\Tcal}\varsubsetneq\msc{S}(B)$ but the case of $\wh\Tcal$ is additionally complicated by  irreducibility of the group action $\msc{K}$ on the target Hilbert space. So the first consequence is that a legitimate antidegrading quantum channel only has to satisfy $\dom{\Acal}=\im{\wh{\Tcal}}$ together with $\im{\Acal}=\im{\Tcal}$. This is perhaps not  that surprising and it has actually nothing to do with~the seeming oddity in~(\ref{eq:hiddenTP}). The real cause is the action of $\wh\Tcal$ on an identity. Due to the reducible action of $\msc{K}$ the channel $\wh\Tcal$ is  not, unlike $\Tcal$, unital (see Definition~\ref{def:unitalCh}) and we get
  \begin{equation}\label{eq:NotAUnitMatrix}
    \wh\Tcal(\id_A)={1\over4}\diag{[\{1+t,1+t,1+t,1-3t\}]}.
  \end{equation}
  Now, since $\Tcal$ is unital, if $\Tr{B}R_\Acal$ in~(\ref{eq:hiddenTP}) was in the form of the unit matrix, as we could naively expect, the condition of unitality on the composite channel $\Tcal=\Acal\circ\wh\Tcal$ would be violated leading to a contradiction.

  We observe that $\Tr{B}R_\Acal=W_\Acal^\dg W_\Acal$, where $W_\Acal:E\mapsto BR$ is an isometric extension of $\Acal$ (see Def.~\ref{def:Stinespr}). We can still call it an isometry, i.e. $\|W_\Acal\nu\|_{BR}=\|\nu\|_E$, where $\|\centerdot\|$ denotes a suitable norm, as long as $\nu$ is restricted to an appropriate subset of $\bbC^{d_E}$.

  Finally, from~(\ref{eq:NotAUnitMatrix}) note that for $t=-1$ and $t=1/3$ everything is ``in order''. This is because $\wh{\Tcal}$ is irreducibly covariant. Similarly for $t=0$ since $\Tcal=\id$ holds. The physical interpretation of $t=0,1/3$ can be found in Section~\ref{sec:clonerCap}.
\end{exa}

\subsection{Quantum capacity of the $d=2,3$ TD channel and the optimal asymmetric cloners}\label{sec:clonerCap}

A subset of the complementary qubit TD channel $\wh\Tcal$ (see below Eq.~(\ref{eq:mixedsymmetryOp})) can be interpreted in an interesting way. We notice that the coefficients $\a$ and $\b$ turn out to be the parameters appearing in the optimal universal asymmetric  $1\to1+1$ cloner for qubits~\cite{niu1998optimal,cerf2000pauli}. There exists a fundamental trade-off for the quality of the clones and the role of $\a,\b$ is to ``tune'' how close in terms of fidelity one of the clones will be  to the input state~\cite{fan2013quantum}. This correspondingly determines the best achievable quality of the other clone. For this purpose it is advantageous to introduce an asymmetry parameter $0\leq p\leq1$~\cite{CerfFiurasek2006} related to $\a$ and $\b$ in the following way:
\begin{subequations}
  \begin{align}
    \a^2 & = {p^2\over{2(1-p+p^2)}}, \\
    \b^2 & = {(1-p)^2\over{2(1-p+p^2)}}.
  \end{align}
\end{subequations}
Hence $t=p(1-p)/(1-p+p^2)$ and we get  the optimal universal symmetric cloner for $p=1/2$~\cite{buvzek1996quantum} corresponding to $t=1/3$. The optimal maximally asymmetric universal qubit cloner is obtained for $p=0,1$ where $t=0$.

Following Example~\ref{exa:TDchannel}, there exists an antidegrading map for $t\in[-2/3,1/3]$. Hence, with the help of the explicit action of $\wh{\Tcal}$, Eq.~(\ref{eq:complTDoutput}), we readily calculate the quantum capacity of $\wh\Tcal$. As revealed in Eq.~(\ref{eq:singleletterCap}) we have to perform the maximization of the coherent information only over a single copy of the channel. But the qubit TD channel $\Tcal$ and its complement $\wh\Tcal$ are also $SU(2)$ covariant and this implies that the maximizing ensemble is a maximally mixed input state of one qubit $\vr=\id_A/2$~\cite{bradler2010conjugate}. Note that we have argued in Example~\ref{exa:TDchannelAgain} that the output of $\wh\Tcal$ does not transform irreducibly. This, however, does not limit the proof presented in~\cite{bradler2010conjugate} and we conclude
\begin{equation}\label{eq:hatTCapacity}
  Q(\wh\Tcal)=H\big(\wh\Tcal(\id/2)\big)-H\big(\Tcal(\id/2)\big)=-3\frac{1+t}{4}\log_2{\frac{1+t}{4}}-\frac{1-3t}{4}\log_2{\frac{1-3t}{4}}-1.
\end{equation}
See Fig.~\ref{fig:CapQubitTDcompl} for the capacity plot.
\begin{figure}[t]
   \resizebox{8cm}{!}{\includegraphics{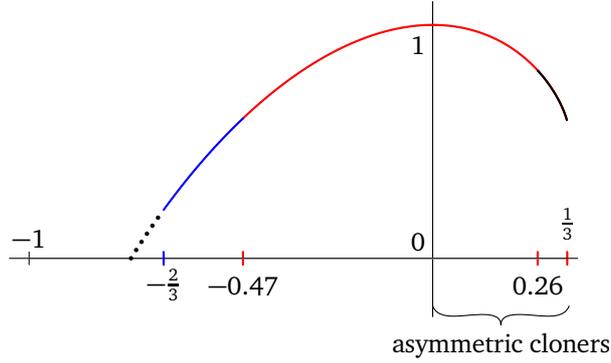}}
    \caption{The quantum capacity $Q(\wh{\Tcal})$ of the complement of the qubit transpose depolarizing channel $\wh\Tcal$ is plotted as a function of the depolarizing parameter $t\in[-1,1/3]$, see Eq.~(\ref{eq:hatTCapacity}). The red (dashed) portion of the curve is where the calculation using the superoperator formalism revealed a degrading map. The blue and black  sections contain numerically found examples of a degrading map and the black section is where conjugate degradability leads to the capacity formula. No degrading channel exists for the dotted portion due to superadditivity of coherent information for the qubit depolarizing channel~\cite{smith2007degenerate} and the relation~Eq.~(\ref{eq:DepolSsTD}).  The section of the parameter space $t\in[0,1/3]$, where $\wh{\Tcal}$ corresponds to the optimal asymmetric qubit cloner.}
    \label{fig:CapQubitTDcompl}
\end{figure}
The one-shot quantum capacity $Q^{(1)}(\wh\Tcal(\vr))$ appearing in Eq.~(\ref{eq:quantCap}) continues to be positive  almost up to $t=-3/4$ but no antidegrading CP map exists beyond $t=-2/3$ (again due to~(\ref{eq:DepolSsTD}) and~\cite{smith2007degenerate}). Finally, notice that for $t=1/3$ the quantum capacity formula reduces to $Q(\wh\Tcal)=\log_2{3}-1$ that was already found in~\cite{bradler2010conjugate} for the optimal (symmetric) qubit cloning channel~$\Cl{2}$ (the red dot in Fig.~\ref{fig:CapQubitTDcompl}).

\begin{figure}[t]
   \resizebox{7cm}{!}{\includegraphics{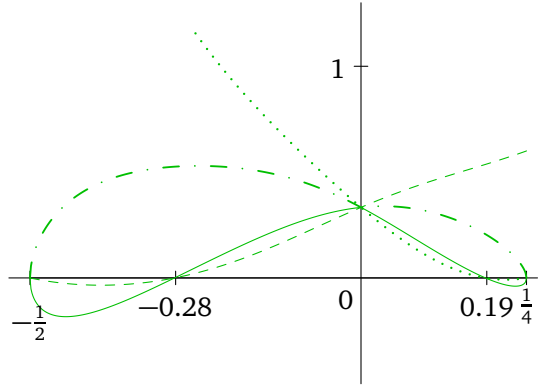}}
    \caption{Three distinctive eigenvalues of the Choi matrix for the antidegrading candidate $\msf{A}$ of the qutrit transpose depolarizing channel are depicted as a function of the depolarizing parameter $t\in[-1/2,1/4]$.}
    \label{fig:qutrit}
\end{figure}

For qutrits ($d=3$), the TD channel is in no way equivalent to the qutrit depolarizing channel. Since $\rank{R_\Tcal}=9$ we have $d_A=3<d_E=9$ and so according to Theorem~\ref{thm:IFFcondition}, a degrading map for $\wh\Tcal$ (antidegrading for $\Tcal$) is not unique. Hence the calculation of the realigned linear operator $\msf{A}=\wh{\msf{T}}^{-1}\msf{T}$ cannot be trusted whenever the eigenvalues are negative. Skipping the details, the eigenvalues are depicted in Fig.~\ref{fig:qutrit} and indeed our distrust is justified. The numerical search reveals a degrading map for any checked value $t\in[-1/2,1/4]$ and we can conclude that with high confidence the qutrit TD channels are all antidegradable. The author is not aware of an analytical proof of this fact but it agrees and extends an earlier observation~\cite{fannes2004additivity} that for $-1/8\leq t\leq1/4$, the TD channel is entanglement-breaking. They are known to be a proper subset of antidegradable channels~\cite{cubitt2008structure}. Antidegradability in the whole interval is further supported by the one-shot quantum capacity calculation being non-negative for $t\in[-1/2,1/4]$ that, based on the numerical evidence, we tentatively declare to be the quantum capacity of~$\wh\Tcal$ (on the whole interval)
\begin{equation}\label{eq:hatTCapacityQutrit}
  Q(\wh\Tcal)
  =-2\frac{1+2t}{3}\log_3{\frac{1+2t}{9}}-\frac{1-4t}{3}\log_3{\frac{1-4t}{9}}-1
\end{equation}
plotted in Fig.~\ref{fig:CapQutritTDcompl}
\begin{figure}[t]
   \resizebox{8cm}{!}{\includegraphics{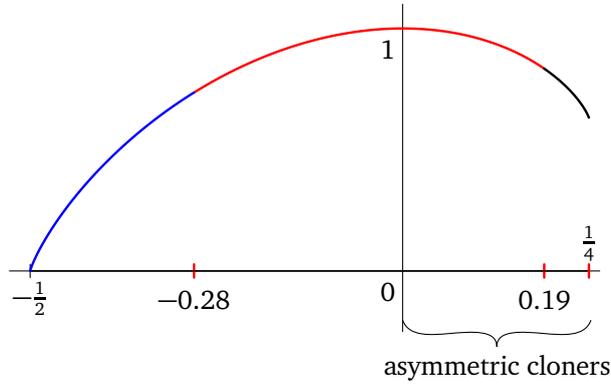}}
    \caption{The quantum capacity $Q(\wh{\Tcal})$ of the complement of the qutrit transpose depolarizing channel is plotted as a function of the (transpose) depolarizing parameter $t\in[-1/2,1/4]$, see Eq.~(\ref{eq:hatTCapacityQutrit}). The red (dashed) portion of the curve is where the calculation using the superoperator formalism revealed a degrading map. The blue  and black  sections contain numerically found examples of a degrading map and the black section is where conjugate degradability leads to the capacity formula. The section of the parameter space $t\in[0,1/4]$, where $\wh{\Tcal}$ corresponds to the optimal asymmetric qutrit cloner.}
    \label{fig:CapQutritTDcompl}
\end{figure}

\subsection{Exclusively conjugate degradable channels}

All conjugate degradable channels known so far are also degradable. It would be highly desirable to find \emph{exclusively} conjugate degradable channel, that is, conjugate degradable channels that are not degradable. Thanks to Theorem~\ref{thm:IFFcondition} we know when the simple calculation of the degrading map candidate $\msf{D}$ in Eq.~(\ref{eq:degradableSupOp}) using the linear superoperator formalism unambiguously reveals whether it is a CP map. Similarly, Corollary on page~\pageref{cor:IFFcondition} informs us about the unambiguous existence of a conjugate degrading quantum channel. As observed in~\cite{bradler2010conjugate}, if $\Mcal$ is conjugate degradable then its complement $\wh\Mcal$ is \emph{PPT} (positive partial transpose).
\begin{defi}
  \begin{enumerate}[label=(\roman*)]\item[] \mbox{}
    \item A channel is PPT if its Choi matrix is PPT.
    \item A bipartite state $\s\in\msc{S}(AB)$ is PPT if $\s^{\top_{A(B)}}\in\msc{S}(AB)$.
  \end{enumerate}
\end{defi}

If $\Mcal$ is an exclusively conjugate degradable channel then its complement is known as \emph{entanglement-binding} channel~\cite{horodecki2000binding}. Its output is a bound entangled state (nonseparable PPT state).  Due to the following result, this immediately gives up the circumstances under which it is hopeless to look for exclusively conjugate degradable channels.
\begin{thm}[\cite{horodecki2000operational}]\label{thm:HoroRank}
  Let $\vr\in\msc{S}(AE)$ such that $\rank{\vr}\leq\max{[d_A,d_E]}$. Then $\vr$ is separable iff it is PPT.
\end{thm}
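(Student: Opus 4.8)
The plan is to prove the two implications separately, the forward one being immediate and the converse carrying all the weight.

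\smallskip
\emph{Separable $\Rightarrow$ PPT.} This is the Peres criterion and uses nothing about the rank: if $\vr=\sum_i p_i\kbr{a_i}{a_i}\otimes\kbr{e_i}{e_i}$ with $p_i\geq 0$, then $\vr^{\top_A}=\sum_i p_i\kbr{\bar a_i}{\bar a_i}\otimes\kbr{e_i}{e_i}\geq 0$, so $\vr$ is PPT.

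\smallskip
\emph{PPT $\Rightarrow$ separable.} Assume $\vr$ is PPT and, without loss of generality, $d_E\geq d_A$, so that $r:=\rank{\vr}\leq d_E=\max[d_A,d_E]$. The backbone is the range criterion of P.~Horodecki together with a subtraction argument. Recall that $\vr$ is separable iff there are product vectors $\{\ket{a_i}\otimes\ket{e_i}\}_i$ spanning $\im{\vr}$ whose partially conjugated partners $\{\ket{\bar a_i}\otimes\ket{e_i}\}_i$ span $\im{\vr^{\top_A}}$. I would run an induction on $\rank{\vr}+\rank{\vr^{\top_A}}$ via the following step: if $\ket{a}\otimes\ket{e}\in\im{\vr}$ and $\ket{\bar a}\otimes\ket{e}\in\im{\vr^{\top_A}}$, take the largest $\Lambda>0$ with $\vr':=\vr-\Lambda\,\kbr{a}{a}\otimes\kbr{e}{e}\geq 0$; then $\vr'$ is again PPT (its partial transpose equals $\vr^{\top_A}-\Lambda\,\kbr{\bar a}{\bar a}\otimes\kbr{e}{e}$, which stays positive up to that $\Lambda$), and maximality of $\Lambda$ forces the range of $\vr'$ or of $\vr'^{\top_A}$ to drop, so $\rank{\vr'}+\rank{\vr'^{\top_A}}$ strictly decreases. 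Since a rank-one PPT state has only one nonzero Schmidt coefficient and is therefore a product state, iteration writes $\vr$ as a finite sum of product projectors, i.e.\ a separable state; moreover $\rank{\vr'}\leq\rank{\vr}$ at every stage, so the rank hypothesis $\rank{\cdot}\leq\max[d_A,d_E]$ persists and can be reused.

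\smallskip
Everything thus reduces to the lemma where I expect the genuine difficulty to lie: \emph{a nonzero PPT state $\vr\in\msc{S}(AE)$ with $\rank{\vr}\leq\max[d_A,d_E]$ admits a product vector $\ket{a}\otimes\ket{e}\in\im{\vr}$ whose partner $\ket{\bar a}\otimes\ket{e}$ lies in $\im{\vr^{\top_A}}$.} The natural tool is a dimension count on the Segre variety $\Sigma$ of product vectors in $\bbC^{d_A}\otimes\bbC^{d_E}$, of complex dimension $(d_A-1)+(d_E-1)$, intersected with the projectivizations of $\im{\vr}$ and (through linear conditions) $\im{\vr^{\top_A}}$. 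The main obstacle is precisely that the crude count is too weak: for a generic subspace of dimension $r$ one only gets a product vector when $r>(d_A-1)(d_E-1)$, which is far above $\max[d_A,d_E]$. The actual argument must exploit the positive semidefiniteness of \emph{both} $\vr$ and $\vr^{\top_A}$ simultaneously — not merely the dimensions of their ranges — to force the intersection to be nonempty already when $\rank{\vr}\leq\max[d_A,d_E]$; this is the delicate part, and I would follow the proof of~\cite{horodecki2000operational} at this point rather than attempt an independent derivation.
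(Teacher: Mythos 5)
The paper offers no proof of this statement at all: it is imported verbatim from \cite{horodecki2000operational}, so there is no internal argument to compare yours against. Your forward direction (separable $\Rightarrow$ PPT) is the standard Peres observation and is fine.

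The converse, however, has a genuine gap which you acknowledge yourself: the entire content of the theorem is concentrated in the lemma that a PPT state $\vr$ with $\rank{\vr}\leq\max{[d_A,d_E]}$ admits a product vector $\ket{a}\otimes\ket{e}\in\im{\vr}$ whose partner $\ket{\bar a}\otimes\ket{e}$ lies in $\im{[\vr^{\top_A}]}$, and for exactly that step you defer to \cite{horodecki2000operational}. A proof whose only nontrivial ingredient is outsourced to the reference being proved is not a proof. Two further remarks. First, even granting the lemma, your subtraction step is stated incorrectly: you take the largest $\Lambda$ with $\vr-\Lambda\,\kbr{a}{a}\otimes\kbr{e}{e}\geq0$ and assert parenthetically that the partial transpose ``stays positive up to that $\Lambda$''; the critical values of $\Lambda$ for $\vr\geq0$ and for $\vr^{\top_A}\geq0$ are in general different, so one must take the minimum of the two, after which only one of the two ranks is guaranteed to drop (which still suffices for the induction, but needs to be said). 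Second, your closing heuristic points in the wrong direction: the argument of \cite{horodecki2000operational} does not rescue the Segre-variety dimension count by invoking positivity; rather, after reducing to the case $\rank{\vr}=\rank{\vr_E}=d_E$ it puts $\vr$ into a canonical block form in which the PPT condition forces a family of matrices to be commuting and normal, hence simultaneously diagonalizable, and this directly produces a decomposition into $\rank{\vr}$ product projectors. So the skeleton you describe is plausible, but the load-bearing step is absent and the mechanism you gesture at for supplying it is not the one that works.
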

So the next lemma could be called bad news.
\begin{lem}
  Exclusively conjugate degradable channels can exist only for $d_B>d_A$.
\end{lem}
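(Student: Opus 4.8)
The statement says: if $\Mcal$ is exclusively conjugate degradable (conjugate degradable but not degradable), then $d_B > d_A$, where $\Mcal:\msc{S}(A)\to\msc{S}(B)$ with complement $\wh\Mcal:\msc{S}(A)\to\msc{S}(E)$. The plan is to argue by contraposition: assume $d_B \leq d_A$ and show that conjugate degradability forces degradability, so the channel cannot be exclusively conjugate degradable. The key observation is that when $d_B \leq d_A$, Corollary~\ref{cor:IFFcondition} tells us that if a conjugate degrading map $\overline{\Dcal}$ exists it is \emph{unique}, and by the same reasoning (Theorem~\ref{thm:IFFcondition}) if a degrading map $\Dcal$ exists it is likewise unique; moreover both existence questions are governed by the solvability condition $\rank{[\msf{M}\otimes\id]} = \rank{[\msf{M}\otimes\id\,|\,\col{\wh{\msf{M}}}]}$ (resp. with $\col{\C\circ\wh{\msf{M}}}$), which we assume holds throughout. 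The crux is therefore to show that CP-ness transfers between the (unique) degrading candidate and the (unique) conjugate degrading candidate when $d_B\leq d_A$.

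First I would recall from the proof of Lemma~\ref{lem:ConjDegrSuperOp} that $\overline{\msf{D}} = \msf{D}\,\msf{C}$, where $\msf{C} = R_\C$ is the swap/permutation matrix implementing density-matrix complex conjugation on the $E$ output; this is a fixed unitary (indeed permutation) matrix depending only on $d_E$. So the two candidate superoperators are related by right-multiplication with a fixed permutation. Next I would translate the reshuffling $\msf{D}\mapsto R_\Dcal$ (Eq.~(\ref{eq:ChoiVSSuperop}) read backwards) and observe that the relation $\overline{\msf{D}} = \msf{D}\msf{C}$ becomes, at the level of Choi matrices, $R_{\overline{\Dcal}} = (\id_B\otimes\C)\circ R_\Dcal$ — i.e., the Choi matrix of the conjugate degrading candidate is obtained from that of the degrading candidate by partial complex conjugation (equivalently partial transposition on the $E$ factor, since one would apply it coordinatewise). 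The essential point is then: $R_\Dcal \geq 0$ if and only if $R_{\overline{\Dcal}}$ is PPT with respect to the $E$ subsystem. This is where the rank hypothesis enters: applying Theorem~\ref{thm:HoroRank} to the bipartite state (up to normalization) $R_{\overline{\Dcal}}\in\msc{S}(BE)$, if $\rank{R_{\overline{\Dcal}}}\leq \max[d_B,d_E]$ then PPT forces separability, hence in particular forces $R_{\overline{\Dcal}}$ itself to be positive and its partial conjugate $R_\Dcal$ to be positive as well — so the degrading candidate is CP, and $\Mcal$ is degradable.

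The steps in order: (i) assume $d_B\leq d_A$; (ii) conjugate degradability gives a CP conjugate degrading channel $\overline{\Dcal}$, hence (Corollary~\ref{cor:IFFcondition}) the unique one, with Choi matrix $R_{\overline{\Dcal}}\geq0$; (iii) since $\Mcal$ is conjugate degradable, $\wh\Mcal$ is PPT (the observation from~\cite{bradler2010conjugate} recalled just above the statement), so $\C\circ\wh\Mcal$ is CP and its own complementary structure is well-behaved — more to the point, the diagram $\overline{\Dcal}\circ\Mcal = \C\circ\wh\Mcal$ combined with $R_{\overline{\Dcal}}\geq 0$ and the fixed permutation relation $R_\Dcal = (\id_B\otimes\C)\circ R_{\overline{\Dcal}}$ shows $R_\Dcal$ is the partial conjugate of a positive matrix; (iv) invoke the rank bound: because $R_\Dcal$ has Choi rank at most $\min[d_B^2, d_E^2]$ and in the relevant regime this is controlled so that $\rank{R_{\overline{\Dcal}}}\leq\max[d_B,d_E]$, Theorem~\ref{thm:HoroRank} applies and PPT $\Rightarrow$ separable $\Rightarrow$ positivity of $R_\Dcal$; (v) conclude $\Dcal$ is a genuine CP degrading map, so $\Mcal$ is degradable, contradicting exclusivity. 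Hence exclusivity requires $d_B > d_A$.

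\textbf{Main obstacle.} The delicate point is step (iv): verifying that the rank hypothesis of Theorem~\ref{thm:HoroRank} is actually satisfied. A priori $R_{\overline{\Dcal}}\in\msc{S}(BE)$ could have rank as large as $d_B d_E$, which exceeds $\max[d_B,d_E]$ in general, so the Horodecki rank criterion does not apply for free. One must argue that in the $d_B\leq d_A$ regime the relevant Choi matrix has small rank — plausibly by using that $\wh\Mcal$ being PPT and having Choi rank $d_E\geq\rank{R_\Mcal}$ constrains the structure, or by restricting to the subspace $\dom{\overline{\Dcal}}=\im{\wh\Mcal}$ (as in Example~\ref{exa:TDchannelAgain}) so that the effective dimensions shrink. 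Pinning down exactly which rank quantity must be bounded, and showing the bound holds whenever $d_B\leq d_A$, is the heart of the argument; everything else is the permutation bookkeeping of Lemma~\ref{lem:ConjDegrSuperOp} and the uniqueness statement of Corollary~\ref{cor:IFFcondition}.
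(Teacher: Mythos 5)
There is a genuine gap, and it is concentrated in your step (iv) --- not only in the unverified rank hypothesis you flag, but in the logical direction of the argument. Theorem~\ref{thm:HoroRank} \emph{consumes} PPT-ness as an input and returns separability; it cannot manufacture PPT-ness of $R_{\overline{\Dcal}}$ from its mere positivity. What you actually need is precisely that $R_{\overline{\Dcal}}$ is PPT on the $E$ factor (so that $R_\Dcal=R_{\overline{\Dcal}}^{\top_E}\geq0$), and a low-rank positive state need not be PPT: any entangled pure state is a rank-one counterexample. So even if you granted the rank bound, the chain ``positive $+$ low rank $\Rightarrow$ separable $\Rightarrow$ PPT'' never gets started. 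On top of that, as you yourself note, $\rank{[R_{\overline{\Dcal}}]}$ can be as large as $d_Bd_E$, and nothing in the hypothesis $d_B\leq d_A$ controls it.

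The paper applies Theorem~\ref{thm:HoroRank} to a different object, the Choi matrix $R_{\wh\Mcal}\in\msc{S}(AE)$ of the \emph{complementary channel}, for which both hypotheses have teeth: conjugate degradability of $\Mcal$ already guarantees that $R_{\wh\Mcal}$ is PPT (the observation from~\cite{bradler2010conjugate} recalled just before the lemma), and if $R_{\wh\Mcal}$ were separable then $\wh\Mcal$ would be entanglement-breaking, hence antidegradable, hence $\Mcal$ degradable --- contradicting exclusivity. So $R_{\wh\Mcal}$ must be a nonseparable PPT state, and Theorem~\ref{thm:HoroRank} then forces $\rank{[R_{\wh\Mcal}]}>\max{[d_A,d_E]}\geq d_A$. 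The decisive link you are missing is that $\rank{[R_{\wh\Mcal}]}$ is the Choi rank of $\wh\Mcal$, i.e.\ a lower bound on the dimension of the environment of $\wh\Mcal$ --- and that environment is exactly $B$. Hence $d_B\geq\rank{[R_{\wh\Mcal}]}>d_A$. Your permutation bookkeeping from Lemma~\ref{lem:ConjDegrSuperOp} and the uniqueness statements of Theorem~\ref{thm:IFFcondition} and its Corollary are not needed at all; the dimension $d_B$ enters the argument only through the Choi rank of the complement.
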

\begin{proof}
    The complementary channel $\wh\Mcal:\msc{S}(A)\mapsto\msc{S}(E)$ of a hypothetical exclusively conjugate degradable channel $\Mcal:\msc{S}(A)\mapsto\msc{S}(B)$ must necessarily corresponds to a PPT Choi matrix $R_{\wh\Mcal}$. But  according to Theorem~\ref{thm:HoroRank}, only $R_{\wh\Mcal}$ satisfying $\rank{[R_{\wh\Mcal}]}>\max{[d_A,d_E]}$ are not separable and therefore only in this situation it makes sense to try to show that no degrading map exists. But $\rank{[R_{\wh\Mcal}]}$ is also the smallest possible $d_B$. Hence, if $d_A\leq d_E$ then $d_B>d_A$ and Theorem~\ref{thm:IFFcondition} admits non-uniqueness. If $d_A<d_E$ then again $d_B>d_A$ and the kernel containing ambiguities for the conjugate degrading map is even bigger.
\end{proof}
\begin{rem}
  Even stronger result is known for $d_A=d_E=2$. Then, whatever the value of $d_B$ is, there is no bound entangled Choi matrix corresponding to $\wh\Mcal:\msc{S}(A)\mapsto\msc{S}(E)$. This is the celebrated Peres-Horodecki criterion. It is another reason why the qubit TD complement $\wh\Tcal$ cannot be exclusively conjugate degrading in the only remaining possible parameter interval (see the dotted curve in Fig.~\ref{fig:CapQubitTDcompl}).
\end{rem}

\section{Conclusions}

We derived a necessary and sufficient condition to unambiguously decide whether a quantum channel is degradable or conjugate degradable. If the condition is satisfied, the linear superoperator formalism can be very easily used to arrive at the conclusion. In the opposite case, no constructive method seems to be known even though, by again using the superoperator formalism, we showed the roots of the ambiguities and suggested a way to a possible solution. This constitutes an interesting problem for further explorations. The insight obtained in this paper was used to extend the degradability region for the complement to the qubit and qutrit transpose depolarizing channel whose important subset is the optimal asymmetric qubit (qutrit) cloner. Hence we were able to calculate the quantum capacity of all asymmetric qubit and qutrit cloning machines.

The main interesting open problem is the existence of exclusively conjugate degradable channels introduced in~\cite{bradler2010conjugate}. Their quantum capacity is calculable similarly to degradable channels but so far they are not known to form a class of channels on their own. Using the established insights we sharpened the conditions under which they can exist as a separate class and where, on the other hand, would be hopeless to search for them. Their existence is closely related to the properties of bipartite bound entangled states.

\section*{Acknowledgement}
\thanks{The author thanks Vikesh Siddhu and Bob Griffiths for pointing out the importance of the full rank assumption in Theorem~\ref{thm:IFFcondition}.}

\bibliographystyle{unsrt}


\end{document}